\setlist[enumerate]{leftmargin=.5in}
\setlist[itemize]{leftmargin=.5in}
\newtheorem{theorem}{Theorem}[section]
\newtheorem{lemma}[theorem]{Lemma}
\newtheorem{proposition}[theorem]{Proposition}
\newtheorem{definition}[theorem]{Definition}
\newtheorem{assumption}[theorem]{Assumption}
\newcommand{\blind}{1}
\begin{document}

\def\spacingset#1{\renewcommand{\baselinestretch}%
{#1}\small\normalsize} \spacingset{1}



\if1\blind
{
  \title{\bf Optimal Shape Control via $L_\infty$ Loss for Composite Fuselage Assembly}
  \author{Juan Du$^1$, Shanshan Cao$^1$, Jeffrey H. Hunt$^2$, Xiaoming Huo$^1$ \thanks{
    The authors gratefully acknowledge \textit{the Strategic University Partnership between the Boeing Company and Georgia Institute of Technology}. The corresponding author is Xiaoming Huo.}\hspace{.2cm} and Jianjun Shi$^1$\\
    $^1$H. Milton Stewart School of Industrial and Systems Engineering,\\ Georgia Institute of Technology\\
    $^2$The Boeing Company}
  \maketitle
} \fi

\if0\blind
{
  \bigskip
  \bigskip
  \bigskip
  \begin{center}
    {\LARGE\bf Optimal Shape Control via $L_\infty$ Loss for Composite Fuselage Assembly}
\end{center}
  \medskip
} \fi

\bigskip
\begin{abstract}
Natural dimensional variabilities of incoming fuselages affect the assembly speed and quality of fuselage joins in composite fuselage assembly processes. 
Shape control is critical to ensure the quality of composite fuselage assembly. 
In current practice, the structures are adjusted to the design shape in terms of the $\ell_2$ loss for further assembly without considering the existing dimensional gap between two structures. 
Such practice has two limitations: (1) the design shape may not be the optimal shape in terms of a pair of incoming fuselages with different incoming dimensions; (2) the maximum gap is the key concern during the fuselage assembly process, so the $\ell_\infty$ loss of gap after control oughts to be considered. This paper proposes an optimal shape control methodology via the $\ell_\infty$ loss for composite fuselage assembly process by considering the existing dimensional gap between the incoming pair of fuselages. On the other hand, due to the limitation on the number of available actuators in practice, we face an important problem of finding the best locations for the actuators among many potential locations, which makes the problem a sparse estimation problem.
We are the first to solve the optimal shape control in fuselage assembly process using the $\ell_\infty$ model under the framework of sparse estimation, where we use the $\ell_1$ penalty to control the sparsity of the resulting estimator.
From statistical point of view, this can be formulated as the $\ell_\infty$ loss based linear regression, and under some standard assumptions, such as the restricted eigenvalue (RE) conditions, and the light tailed noise, the non-asymptotic estimation error of the $\ell_1$ regularized  $\ell_\infty$  linear model is derived to be the order of $O(\sigma\sqrt{\frac{S\log p}{n}})$, which meets the upper-bound in the existing literature.
Compared to the current practice, the case study shows that our proposed method significantly reduces the maximum gap between two fuselages after shape adjustments.


\end{abstract}

\noindent%
{\it Keywords:}   linear regression, sparsity, assembly processes, dimensional variations
\vfill

\newpage
\spacingset{1.45} 

\section{Introduction}
Recently, composite materials are widely used in large space structures due to its superior properties such as high strength-to-weight ratio. For example, an airplane of Boeing 787 comprises more than 50$\%$ composite parts by weight, and the fuselage is one key composite part of an aircraft \cite{chawla2012composite}. In practice, there are natural dimensional variations in the fabrication of a fuselage due to different manufacturing batches or different suppliers \cite{gates2007boeing}. When two fuselages assemble together, there is a gap, as shown in Figure \ref{fig:Illustrations}. Such gap will infuence the assembly quality and productivity of fuselage assembly.
Thus, dimension variations influence the speed and the quality of composite fuselage assembly. As a result, the interface between two fuselages requires shape adjustments before the composite fuselage assembly process. 

\begin{figure}[htbp]
    \centering
        \includegraphics[width=0.8\textwidth]{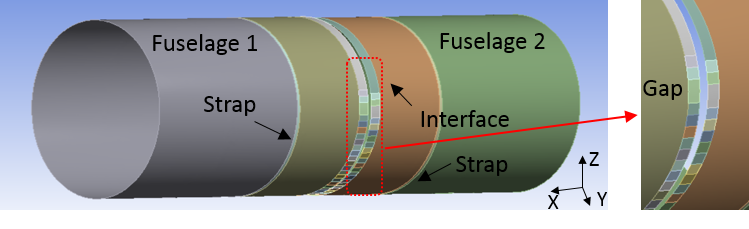}
    \caption{Illustrations of composite fuselage assembly.}
    \label{fig:Illustrations}
\end{figure}

In practice, actuators are used for shape adjustments of the interface between two composite fuselages, as shown in Figure \ref{fig:actuator}. More details about using actuators for shape adjustments of composite fuselages can be found in \cite{wen2018feasibility}, \cite{yue2018surrogate}. In this paper, we focus on the shape adjustments of interface that are close to the edge plane of the fuselage. In the state of the art, the target shape after control is the design shape, as shown in Figure \ref{fig:interfaces}. The dashed line indicates the design shape and the solid line shows the shape of an incoming fuselage. The arrows represent the positions of the actuators used for shape control of the interface. The current shape control strategy regards the target shape after adjustments as the design shape in terms of the $\ell_2$ loss, which has two key limitations. (1) It is non-optimal. For a given pair of incoming fuselages with specific shapes, adjustment to the design shape is not optimal. In other words, there is no guarantee that the optimal shape control can be realized by shape adjustments to the design shape for two different incoming fuselages. Here, optimality means to achieve the minimum of maximum gap on the interface between two fuselages after shape control. (2) For fuselage assembly, the maximum gap between two fuselages is the key quality indicator during the fuselage assembly process. Thus, the  $\ell_\infty$ loss should be considered.

\begin{figure}[htbp]
    \centering
        \includegraphics[width=0.5\textwidth]{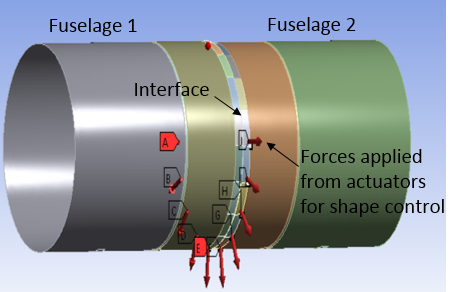}
    \caption{An illustration of shape adjustments by using actuators.}
    \label{fig:actuator}
\end{figure}

\begin{figure}[htbp]
    \centering
        \includegraphics[width=0.8\textwidth]{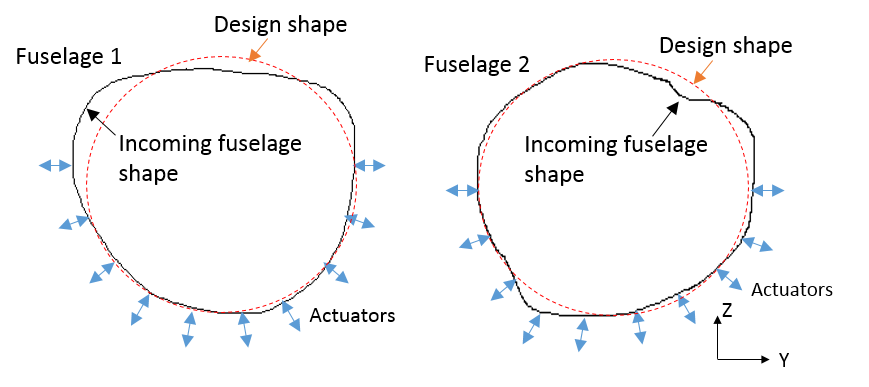}
    \caption{Schematics of interfaces between two fuselages.} 
    \label{fig:interfaces}
\end{figure}

In the literature, multiple efforts are made to achieve shape control for structures. To focus on the research work related to this paper, we mainly introduce the literature for composite fuselage shape control. 

For the shape control of composite fuselages, Wen et al. \cite{wen2018feasibility} first develop a new shape control system based on the Finite Element Analysis (FEA) to improve the dimensional quality and productivity. Also, they show the feasibility of shape control for composite fuselages by using the proposed FEA platform. Based on this FEA platform, Yue et al. \cite{yue2018surrogate} propose a surrogate model-based control strategy by considering the uncertainties for composite fuselage assembly. By minimizing the $\ell_2$ loss of dimensional errors in relative to the design shape, they calculate the optimal forces applied from actuators for single fuselage shape control. Based on the surrogate model, Du et al. \cite{du2019fuselage} propose an optimal actuator placement strategy for the shape adjustment of composite fuselages; their strategy reduces the forces applied from actuators and also dimensional deviations after shape control. However, all these works only consider shape adjustments of a single fuselage, and aim to adjust the incoming fuselages to the design shapes in terms of the  $\ell_2$ loss, which meets the requirement of a single shape adjustment. However, when two fuselages assemble together, the maximum gap, i.e.,  the $\ell_\infty$ norm of the gap, is one of the most important quality indicator in a fuselage assembly process. In terms of the maximum gap reduction for composite fuselage assembly, adjusting incoming fuselages to design shapes cannot guarantee the optimality given a specific pair of fuselages with different dimensional variations. Hence, a better shape control strategy is needed in the composite fuselage assembly process. 

To fill the aforementioned gap, this paper proposes an optimal shape control strategy for composite fuselage assembly. We consider minimizing the maximum of dimensional gap between a pair of incoming fuselages.  Instead of adjusting each fuselage into the design shape, we  consider the initial gap between the pair of fuselages and optimize the adjustment into an intermediate shape. 

As shown in \cite{du2019fuselage}, one other direct result is that the optimal  placement of actuators can also lead to the improvement of the performance in shape control.
In the current paper, as a part of the optimal shape control strategy for the composite fuselage assembly, we also consider the optimal  placement of actuators for two fuselage assembly. 
Given the dimensions of the pair of incoming fuselages, a sparse learning model is proposed to link actuator forces with the weighted maximum gap deviation of a pair of fuselages. From a statistical point of view, the proposed sparse learning model is the $\ell_1$ sparsity penalized $\ell_\infty$ loss linear regression. We also contribute to analyze the properties of the resulting estimator under a light-tailed sub-Gaussian errors assumption. Specifically, we provide the non-asymptotic upper-bound of the estimation error and prediction error in the sparse estimation problem. An alternating direction method of multipliers (ADMM)-based algorithm is derived to solve the optimization problem. The nonzero components of the force vector correspond to the optimal actuator locations. Often in practice, the last step is to refit the model with only the selected locations and get a more accurate estimation on the unknown forces. Consequently, the optimal forces applied to the pair of fuselages can be obtained by minimizing the weighted maximum gap deviation. Finally, case studies of composite fuselage assembly process are used to verify the effectiveness of our proposed method.

\subsection{Notations}
Throughout this work, we will need the following notations.
We will denote vectors/matrix as $Y$, $X$, $\beta$, $\epsilon$, etc, and all the vectors are the column vectors by default. 
For a matrix $A \in \mathbb{R}^{m \times n}$, $A_{ij}$ denotes the element of $A$ at the $(i,j)$th location, and $A_{i}$ denotes the $i$th column  of $A$.
Similarly, for a vector $\beta \in \mathbb{R}^{m}$, $\beta_{i}$ denotes the $i$th element of $\beta$.
For a subset $S \subset \{1, \cdots, m\}$, $A_{S}$ denotes sub-matrix of $A$ containing the columns whose indices are in $S$, and $\beta_{S}$ denotes a sub-vector of $\beta$ that contains the elements whose indices are in $S$.
We will use   $\|\cdot\|_p$ to denote the $\ell_p$ norm. 
Specifically, for a vector $x \in \mathbb{R}^n$, the $\ell_p$ norm of $x$ is defined as: $\|x\|_p := (\sum_{i = 1}^{n}|x_i|^p)^{\frac{1}{p}}$.
According to this definition, the $\ell_\infty$ norm of $x$ is simply $\|x\|_\infty = \max_{i = 1}^{n}|x_i|$.
In the case of the Euclidean norm,  which is also known as the $\ell_2$ norm, we will simply omit the subscript $p$ $(p=2)$, i.e., $\|\cdot\|$.
We will denote the entrywise max norm of a matrix $A \in \mathbb{R}^{m \times n}$ as $\|A\|_\infty = \max_{1\leq i \leq m, 1\leq j \leq n }|A_{ij}|$.
Throughout this work, we use the $\mbox{sign}(\cdot)$ function to indicate the sign of a vector or scalar.
For $x \in  \mathbb{R}$, $\mbox{sign}(x) = 1$ if $x>0$ and  $\mbox{sign}(x) = -1$ otherwise.
For a vector $\beta \in \mathbb{R}^{m}$, $\mbox{sign}(\beta)$  is a vector with the $i$th element equal to $\mbox{sign}(\beta_i)$.
For a vector $\beta \in \mathbb{R}^{m}$, the support of $\beta$ is defined as $\mbox{supp}(\beta) = \{i: \beta_i \neq 0\}$.
And $|\mbox{supp}(\beta)|$ denote the cardinality of the set $\mbox{supp}(\beta)$.

\subsection{Outline}

In this work, we first provide a detailed description of the physical model that is of concern, including our justifications of using a linear model with the $\ell_\infty$ loss function in Section \ref{sec:fuselagemodel}.
Then, we investigate the theoretical properties of the proposed method, and provide main theoretical results in Section \ref{sec:model}. 
In Section \ref{sec:alg}, we propose an efficient algorithm to solve the proposed $\ell_\infty$ optimization problem. Our method is based on ADMM.
A case study on fuselage assembly process using the FEA generated data  is studied in Section \ref{sec:numeric}, which validates the effectiveness of our proposed approach in this type of engineering problems.
Finally, we conclude our work in  Section \ref{sec:conclusion}.
All the proofs of our main theorems are provided in the Appendix.

\section{Physical model}\label{sec:fuselagemodel}

In this section, before going to the statistical analysis, we provide necessary background of our physical model. 
For the composite fuselage shape control, only elastic deformation is allowed during the shape adjustment. In this paper, we assume linear mechanical behavior of fuselage deformation that corresponds to the actuator forces. Consequently, the adjusted shape deviations can be formulated as 
\begin{equation}
\delta_i = \psi_i + U_iF_i, \qquad i=1,2.
\end{equation}
where $i$ indicates the $i$th fuselage, $\delta_i \in \mathbb{R}^{2n}_{i}$ is the error in Y and Z directions after shape control and $n$ denotes the number of measurement points for each fuselage; $\psi_i$ and  $U_i \in \mathbb{R}^{2n \times m_i}$ represent dimension deviations and displacement matrix of incoming fuselage $i$, respectively, and $m_i$ denotes the number of feasible positions (e.g., candidate positions where an actuator may be placed) for actuators in the $i$th fuselage edge plane; $F_i$ is the applied force during shape control  of the fuselage $i$.  The physical interpretation of the  displacement matrix $U_i$ is that $U_i$ is the deformation of all the measurement points correspond to the unit force on the structure. 
Without loss of generality, we assume that the locations of measurement points are the same. Therefore, the dimensional gap between two fuselages after shape adjustment can be written as
\begin{equation}
 		\Delta=\delta_2-\delta_1= \psi_2+U_2F_2-(\psi_1+U_1F_1).
\end{equation}
Similar to \cite{du2019fuselage}, $U_i, i=1, 2,$ can be obtained from the surrogate model. Notably, the registration between two fuselages is needed if the number of measurement points for the pair of fuselages is not the same, which can be easily solved through a method that is proposed in \cite{chui2000new}. 

For the composite fuselage assembly process, the main concern before assembly is the maximum gap and its location along the interface between the two fuselages. In this paper, our objective is to minimize the weighted maximum gap between the pair of fuselages in both Y and Z directions, i.e., $\Delta_{max} $, which is defined as 
\begin{equation}
 		\Delta_{max}=\|B \Delta\|_\infty, 
\end{equation}
where $B \in \mathbb{R}^{n \times n}$ is a diagonal matrix of weights, which represent the importance of the gaps in different measurement points. Such weight matrix is determined based on some engineering knowledge. For example, if we emphasize the dimensional gap on the upper fuselage, we can add more weight in the part of the B matrix that correspond to the upper fuselage.

Usually in practice, we only have a limited number of actuators available. Thus, among many potential choices of positions of actuators, we want to find the most effective ones, which may correspond to a sparse solution.  Recall that the optimal actuator locations are the nonzero locations in the solution. Let $m_1$ and $m_2$ denote  the numbers of all feasible locations for actuators along the pair of fuselages, while only a total number of  $M$ actuators are available for the shape control process in practice. 
In order to encourage the sparsity in the resulting solution, we add a sparsity-induced penalty, namely, the $\ell_1$ penalty, on the unknown forces at different locations:
\begin{equation}
\label{eq:objforce_l1}
\begin{split}
\min_{F_1,F_2} &J :=\|B\Delta\|_\infty +  \lambda \|F_1\|_1 +\lambda \|F_2\|_1,
\end{split}
\end{equation}
where  $\lambda$ is the tuning parameter, which controls the sparsity of the resulting estimation.



 From the solution of Problem (\ref{eq:objforce_l1}), the optimal actuator placement is obtained from the support of $F_1$ and $F_2$ for the shape adjustments of the pair of fuselages. 
In order to reduce the bias on the estimation of the forces,which are denoted by $F_1$ and $F_2$, we refit the model using the optimal positions and the optimal forces can be obtained by solving the following problem:  
\begin{equation}
\label{eq:objforce_step2}
\begin{split}
\min_{F_1,F_2} &J :=\|B(\psi_2 -  \psi_1 + (U_2)_{S_2}(F_2)_{S_2} - (U_1)_{S_1}(F_1)_{S_1}) \|_\infty \\
\end{split}
\end{equation}
where $S_1=\mbox{supp}(F_1)$, $S_2=\mbox{supp}(F_2)$.

 In order to make the above physical model more evident to the statisticans without engineering background, we rewrite the physical model as follows:
\begin{equation}
\label{eq:formulation1}
\min_\beta \|Y - X\beta\|_\infty+ \lambda \|\beta\|_1,
\end{equation}
where $Y=B(\psi_2 -  \psi_1 )$, $X=[BU_1, -BU_2]$, and $\beta=[F_1', F_2']'$. More details of rewriting the physical model as a statistical problem will be provided in Section \ref{sec:model}.

From the statistical perspective, the above formulations can be viewed as the linear regression problem with  $\ell_\infty$ loss and  $\ell_1$ regularization, and then refitting the model using only the nonzero locations, which helps reduce the bias on the estimation of the forces, i.e., $F_1$ and $F_2$. In this way, we can have a better estimation in practice.

In the statistical literature, the parameter estimation based on the $\ell_\infty$ loss has been studied since 1980, such as applications in the physical and environmental sciences \cite{james1983fitting,zolghadri2004minimax,qi2015theoretical}, signal processing and systems engineering \cite{milanese1982estimation, alecu2006wavelet, Castillo2009combined}, and so on.
However, the theoretical guarantee is very limited, especially in the sparse estimation framework. 

There are two main directions studying regression analysis in the statistical literature. The first one focuses on the asymptotic distribution in linear regression in a well-posed setting, where we have many more observations than the number of feature variables, and there is no assumption on the sparsity of the unknown signals.
Reference \cite{knightasymptotic} studies the problem of $\ell_\infty$ (or Chebyshev) estimator, which minimizes the maximum absolute residual under the condition where the noise distribution is known to have bounded support or light tails.
They derived the asymptotic distribution of the estimator in the low dimension setting.

The second one focuses on problems in the ill-posed settings, or settings where the unknown signal is sparse. In other words, there are far more feature variables than the number of observations or there are a lot of 
zero components in the coefficients.
The Dantzig selector proposed in \cite{candes2007dantzig} solves the following problem:
\begin{eqnarray}
\label{eq:dantzig}
\begin{split}
\min _{\beta \in \mathbb{R}^p} \quad &\|\beta\|_1,\\
s.t.\quad &\|X^T(Y - X\beta)\|_\infty \leq \lambda \sigma ,
\end{split}
\end{eqnarray}
where the constrained optimization problem seeks to minimize the $\ell_1$ sparsity objective function within the feasible region, where $\|X^T(Y - X\beta)\|_\infty \leq \lambda \sigma$. It can be easily seen that the term $X^T(Y - X\beta)$ is simply the first order derivative of the least square loss. 
Thus, the constraint $\|X^T(Y - X\beta)\|_\infty \leq \lambda \sigma$ will ensure that the scale of the first order derivative of the least square loss is very small.
Essentially, Dantzig selector solves a similar problem as the lasso, i.e., minimizing the least square loss \cite{james2009dasso}.

However, theories to the previous models do not apply to our problem, where we have the incentive to minimize $\|Y - X\beta\|_\infty$ loss. In Section \ref{sec:model}, we rewrite the above sparse estimation problem using the $\ell_\infty$ loss in the statistical language in details, and provide  theoretical results on the non-asymptotic upper-bounds of the estimation error and prediction error under the linear model setting with light tailed sub-Gaussian errors.

\section{Theoretical Results} \label{sec:model}
In this section, we rewrite the above physics model in statistical language to make the problem clearer with reader with no engineering background.
Specifically, we give the statistical formulation in Subsection \ref{subsec:model}.
Then we present our main theoretical results in Subsection \ref{sec:main}.
\subsection{Model in statistical language}\label{subsec:model}
We will consider the linear model throughout this work:
\begin{equation}
\label{eq:dataGeneration}
Y = X^T\beta + \epsilon.
\end{equation}
For all  observations, it is assumed that all elements in the vector $\epsilon$, i.e., $\epsilon_i$, for $i = 1, \cdots, n$, are independent sub-Gaussian random variables with common variance $\sigma^2$ and mean $0$.
As in our motivated example, we want to estimate the forces, which will minimize the $\ell_\infty$ norm of the gap between fuselages along the interface. Thus, we propose the following formulation of the estimation problem:
\begin{equation}
\label{eq:formulation}
\min_\beta \|Y - X\beta\|_\infty,
\end{equation} 
where $Y - X\beta$ is the estimated gap between two fuselages after control. 
In our motivated application, we believe the underlying forces are sparse to find the optimal locations of the actuators, i.e., the $\beta$ vector is sparse. Thus, in this work, we mainly focus on the scenario where the $\beta$ vector is sparse. However, our results also applies for the high-dimensional scenario where $p\gg n$, and the signal (i.e., the coefficient vector) is sparse to guarantee the feasibility of the system. 
In order to derive the sparsity of the resulting solution, we will consider the regularized version of Problem (\ref{eq:formulation}) with $\ell_1$ penalty on the unknown parameter, which is the Problem (\ref{eq:formulation1}) . Notably, the Problem (\ref{eq:formulation1}) is equivalent to the formulation as follows \cite{rockafellar2015convex},
\begin{eqnarray}
\label{eq:formulationLasso}
\begin{split}
\min_{\beta \in \mathbb{R}^p} & \quad \|\beta\|_1,\\
s.t. &\quad \frac{1}{\sqrt{n}}\|Y - X\beta\|_\infty \leq \lambda_0  ,
\end{split}
\end{eqnarray}
where $\lambda_0 > 0$ is a tuning parameter to control the sparsity of the resulting estimation. Also, $\lambda_0$ and $\lambda$ has the one-to-one correspondence. 
It can be easily checked that Problem (\ref{eq:formulationLasso}) is a convex system and can be recast as a linear programming (LP) problem:
\begin{eqnarray}
\begin{split}
\min_{\gamma \in \mathbb{R}^{+p}} &\quad \sum_{i = 1}^p \gamma_i\\
s.t. &\quad -\gamma_i \leq \beta_i \leq \gamma_i, \mbox{ for } i = 1, \cdots, p\\
&\quad - \lambda_0  \leq \frac{1}{\sqrt{n}}(Y - X\beta)_j \leq \lambda_0 , \mbox{ for } j - 1, \cdots, n.
\end{split}
\end{eqnarray}
For the interest of our proof, we use the constrained version as shown in Problem (\ref{eq:formulationLasso}) in our theoretical analysis.

\subsection{Main results} \label{sec:main}
In this section, we present our main theoretical results on the estimation error of the resulting estimator that is proposed in Formulation (\ref{eq:formulationLasso}). The proofs are postponed to the Appendix.
\subsubsection{Restricted strong convexity assumption}
In this subsection, we will summarize the assumptions we need for deriving the main results. These are standard assumptions used in the statistical literature on recovering the sparse signals.

\begin{definition}
The restricted strong convexity (RSC) condition on model matrix $X$ with respect to a set $\mathcal{C}$ is the following, there exists some constant $\gamma > 0$ such that:
\[\frac{\frac{1}{n}\nu X^TX \nu}{\|\nu\|_2^2} \geq \gamma\text{ for all nonzero } \nu \in \mathcal{C},\]
\end{definition}
\noindent where $\gamma$ is called the restricted eigenvalue bound with regard to $\mathcal{C}$.
\begin{assumption}
\label{assump:RSC}
The restricted strong convexity (RSC) condition holds on the following set:
\[
\mathcal{C} = \left\{\nu \in \mathbb{R}^p \left| \|\nu_{\mathcal{S}^C}\|_1 \leq \|\nu_{\mathcal{S}}\|_1
\right.
\right\}.
\]

\end{assumption}
\noindent We have $\mathcal{C} \subset \mathbb{R}^p$ strictly since it is of the form of a cone.

The RSC (Assumption \ref{assump:RSC}) is a standard assumption in the literature for proving the consistency results of regularized high-dimensional sparse estimation problems.
According to \cite{candes2005decoding, candes2007dantzig}, the assumption on restricted strong convexity condition will hold with high probability for random design matrices $X$, such as a random matrix with i.i.d. Gaussican entries, or Rademacher entries.

%
%
%
%

\subsubsection{Estimation error upper-bound}
Before we give the estimation error upper-bound,  we first prove a lemma, which states that the unknown ground truth $\beta^\ast$, is feasible to Problem (\ref{eq:formulationLasso}) with high probability.
\begin{lemma}\label{lemma:feasibleTruth}
Let $\lambda_0 = \sigma\sqrt{\alpha  \frac{\log p}{n}}$ for some $\alpha > 2 $, then with probability exceeding $1 - \frac{2n}{p} \cdot (\frac{1}{p})^{\frac{\alpha - 2}{2}} $, the unknown ground truth $\beta^\ast$ is a feasible solution to Problem (\ref{eq:formulationLasso}).
\end{lemma}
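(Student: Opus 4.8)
The plan is to reduce the feasibility statement to a tail bound on the noise vector. Under the linear model (\ref{eq:dataGeneration}), the residual evaluated at the ground truth is exactly the noise: $Y - X\beta^\ast = \epsilon$. Consequently, $\beta^\ast$ satisfies the constraint of Problem (\ref{eq:formulationLasso}) if and only if $\frac{1}{\sqrt{n}}\|\epsilon\|_\infty \leq \lambda_0$. Substituting $\lambda_0 = \sigma\sqrt{\alpha\,\log p / n}$ and multiplying through by $\sqrt{n}$, feasibility of $\beta^\ast$ is precisely the event $\{\|\epsilon\|_\infty \leq \sigma\sqrt{\alpha\log p}\}$. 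Thus the whole lemma amounts to showing that the largest absolute noise coordinate rarely exceeds the threshold $\sigma\sqrt{\alpha\log p}$.

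First I would control the complementary (failure) event by a union bound over the $n$ coordinates of $\epsilon$, writing
$$ \mathbb{P}\!\left(\|\epsilon\|_\infty > \sigma\sqrt{\alpha\log p}\right) = \mathbb{P}\!\left(\max_{1\leq i \leq n}|\epsilon_i| > \sigma\sqrt{\alpha\log p}\right) \leq \sum_{i=1}^{n}\mathbb{P}\!\left(|\epsilon_i| > \sigma\sqrt{\alpha\log p}\right). $$
Next I would invoke the sub-Gaussian assumption on each $\epsilon_i$. Since each $\epsilon_i$ is mean-zero sub-Gaussian with variance proxy $\sigma^2$, the standard two-sided tail inequality gives $\mathbb{P}(|\epsilon_i| > t) \leq 2\exp(-t^2/(2\sigma^2))$. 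Choosing $t = \sigma\sqrt{\alpha\log p}$ makes the exponent equal to $-\alpha(\log p)/2$, so each summand is at most $2p^{-\alpha/2}$. Summing the $n$ identical bounds yields $\mathbb{P}(\|\epsilon\|_\infty > \sigma\sqrt{\alpha\log p}) \leq 2n\,p^{-\alpha/2}$, and the elementary rewrite $2n\,p^{-\alpha/2} = \frac{2n}{p}\,(1/p)^{(\alpha-2)/2}$ (using $1 + (\alpha-2)/2 = \alpha/2$) matches exactly the complement of the stated probability, which completes the argument.

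Because this is a two-step combination of a union bound with a Gaussian-type tail estimate, I do not expect a genuine obstacle. The only point requiring care is the sub-Gaussian tail \emph{constant}: the clean matching of the stated failure probability depends on the normalization $\mathbb{P}(|\epsilon_i|>t)\leq 2\exp(-t^2/(2\sigma^2))$, so one should confirm this is the convention under which the variance proxy is $\sigma^2$, since the exact cancellation producing the exponent $\alpha/2$ (and hence the factor $(1/p)^{(\alpha-2)/2}$) relies on it. The hypothesis $\alpha > 2$ is then exactly what forces $\frac{2n}{p}(1/p)^{(\alpha-2)/2}$ to be a genuine small probability that vanishes as $p$ grows, so that $\beta^\ast$ is feasible with high probability.
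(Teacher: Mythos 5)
Your proposal is correct and follows essentially the same route as the paper's proof: reduce feasibility of $\beta^\ast$ to the event $\|\epsilon\|_\infty \leq \sqrt{n}\,\lambda_0$, apply the sub-Gaussian tail bound coordinatewise, and take a Bonferroni (union) bound over the $n$ coordinates to obtain the failure probability $2n\,p^{-\alpha/2} = \frac{2n}{p}\,(1/p)^{(\alpha-2)/2}$. The only cosmetic difference is that the paper normalizes $\sigma = 1$ at the outset of its appendix and phrases the tail bound for the scaled noise $\epsilon_i/\sqrt{n}$, whereas you keep $\sigma$ explicit and rescale at the end --- the two are equivalent, and your remark about fixing the sub-Gaussian tail convention $\mathbb{P}(|\epsilon_i|>t)\leq 2\exp(-t^2/(2\sigma^2))$ is exactly the normalization the paper implicitly uses.
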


This lemma basically tells that, if the noise in the data generation is sub-Gaussian, when $\frac{2n}{p} \cdot (\frac{1}{p})^{\frac{\alpha - 2}{2}} \to 0 $, the ground truth $\beta^\ast$ lies in the  feasible region of Problem (\ref{eq:formulationLasso}).
Further, if the noise has a bounded distribution within the interval $[-\lambda_0,\lambda_0]$, $\beta^\ast$ lies in the  feasible region of Problem (\ref{eq:formulationLasso}) with probability 1.

\begin{theorem}
\label{thm:estError_truth_feasible}
Let $\hat{\beta}$ denote the optimal solution to Problem (\ref{eq:formulationLasso}).
Suppose $\beta^\ast$ is any unknown sparse ground truth with $|supp(\beta)| \leq S$, and the  restricted strong convexity assumption holds  for the observed feature matrix $X$ with $\gamma$ in $\mathcal{C}$. Let $\lambda_0 = O(\sigma\sqrt{  \frac{\log p}{n}})$.
We have the following non-asymptotic upper-bound on the estimation error:
\begin{equation}
\label{eq:thm_truth_feasible}
\|\hat{\beta} - \beta^\ast\|_2 \leq O(\sigma\sqrt{S\log p}).
\end{equation}
\end{theorem}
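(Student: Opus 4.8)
The plan is to follow the standard ``basic inequality plus restricted eigenvalue'' recipe used for the Lasso and the Dantzig selector, adapted to the $\ell_\infty$ residual constraint of Problem (\ref{eq:formulationLasso}). Write $\hat{\nu} := \hat{\beta} - \beta^\ast$ for the error vector and $\mathcal{S} := \mbox{supp}(\beta^\ast)$, so that $|\mathcal{S}| \le S$. Throughout I condition on the high-probability event of Lemma \ref{lemma:feasibleTruth} on which $\beta^\ast$ is feasible for Problem (\ref{eq:formulationLasso}); this event carries probability $1 - \frac{2n}{p}(1/p)^{(\alpha-2)/2}$, so the final error bound inherits the same probability. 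The argument then has three movements: (i) show that $\hat{\nu}$ lies in the cone $\mathcal{C}$ so that the RSC assumption applies; (ii) convert the two feasibility constraints into a bound on $\|X\hat{\nu}\|_\infty$; and (iii) sandwich the quadratic form $\tfrac{1}{n}\hat{\nu}^\top X^\top X\hat{\nu}$ between its RSC lower bound and a H\"older-type upper bound, then solve for $\|\hat{\nu}\|_2$.

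For step (i), since $\beta^\ast$ is feasible and $\hat{\beta}$ is the $\ell_1$-minimizer over the feasible set, optimality gives $\|\hat{\beta}\|_1 \le \|\beta^\ast\|_1$. Decomposing along $\mathcal{S}$ and $\mathcal{S}^C$, using $\beta^\ast_{\mathcal{S}^C}=0$ and the triangle inequality, yields $\|\hat{\beta}\|_1 \ge \|\beta^\ast\|_1 - \|\hat{\nu}_{\mathcal{S}}\|_1 + \|\hat{\nu}_{\mathcal{S}^C}\|_1$; combined with $\|\hat{\beta}\|_1 \le \|\beta^\ast\|_1$ this rearranges to $\|\hat{\nu}_{\mathcal{S}^C}\|_1 \le \|\hat{\nu}_{\mathcal{S}}\|_1$, i.e. $\hat{\nu} \in \mathcal{C}$. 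Cauchy--Schwarz on the $S$ active coordinates then gives $\|\hat{\nu}\|_1 \le 2\|\hat{\nu}_{\mathcal{S}}\|_1 \le 2\sqrt{S}\,\|\hat{\nu}\|_2$, and this is precisely the inequality that will inject the $\sqrt{S}$ factor into the final rate.

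For step (ii), observe $X\hat{\nu} = (Y - X\beta^\ast) - (Y - X\hat{\beta})$; since both $\beta^\ast$ and $\hat{\beta}$ satisfy $\tfrac{1}{\sqrt{n}}\|Y - X\beta\|_\infty \le \lambda_0$, the triangle inequality gives $\|X\hat{\nu}\|_\infty \le 2\sqrt{n}\,\lambda_0$. For step (iii), the RSC assumption on $\mathcal{C}$ supplies the lower bound $\tfrac{1}{n}\hat{\nu}^\top X^\top X\hat{\nu} \ge \gamma\|\hat{\nu}\|_2^2$. For the matching upper bound I would write $\hat{\nu}^\top X^\top X\hat{\nu} = \langle \hat{\nu},\, X^\top(X\hat{\nu})\rangle$ and apply H\"older's inequality to obtain $\hat{\nu}^\top X^\top X\hat{\nu} \le \|\hat{\nu}\|_1\,\|X^\top(X\hat{\nu})\|_\infty$, then control $\|X^\top(X\hat{\nu})\|_\infty$ through the residual bound of step (ii) together with the column norms of $X$. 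Feeding in $\|\hat{\nu}\|_1 \le 2\sqrt{S}\,\|\hat{\nu}\|_2$ produces an inequality of the shape $\gamma\|\hat{\nu}\|_2^2 \le C\sqrt{S}\,\lambda_0\,\|\hat{\nu}\|_2$; cancelling one factor of $\|\hat{\nu}\|_2$ and substituting $\lambda_0 = O(\sigma\sqrt{\log p/n})$ delivers $\|\hat{\nu}\|_2 \le O(\sigma\sqrt{S\log p})$.

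The main obstacle I anticipate is step (iii), and specifically the passage from the residual bound $\|X\hat{\nu}\|_\infty \le 2\sqrt{n}\,\lambda_0$ to a bound on $X^\top(X\hat{\nu})$: this requires controlling $\max_j \|X_j\|_1$ (equivalently $\max_j \|X_j\|_2$), which is where the design normalization and the sub-Gaussian concentration enter, and where an extra factor of $\sqrt{n}$ relative to the sharper $O(\sigma\sqrt{S\log p/n})$ rate quoted in the abstract is likely introduced. A cruder alternative is the estimate $\|X\hat{\nu}\|_2^2 \le n\|X\hat{\nu}\|_\infty^2$, which avoids the column-norm issue but discards the cone information and hence would not reproduce the stated $\sqrt{S}$ dependence; reconciling a clean use of the cone condition with the $\ell_\infty$-residual form of the constraint is therefore the delicate point of the argument.
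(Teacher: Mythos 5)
Your proposal follows essentially the same route as the paper's proof: the cone lemma derived from $\ell_1$-optimality plus feasibility of $\beta^\ast$, the bound $\|X\hat{\nu}\|_\infty \leq 2\sqrt{n}\lambda_0$ from joint feasibility of $\hat{\beta}$ and $\beta^\ast$, and the RSC/H\"older sandwich $\gamma\|\hat{\nu}\|_2^2 \leq \frac{1}{n}\|X^TX\hat{\nu}\|_\infty\|\hat{\nu}\|_1$ with the column-norm estimate $\max_j\|X_j\|_1 \leq \sqrt{n}\max_j\|X_j\|_2 = O(n)$ supplying the extra $\sqrt{n}$ that produces the rate $O(\sigma\sqrt{S\log p})$. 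Your closing remark also correctly pinpoints that tightening this column-norm control (assuming $\|X\|_1 = O(\sqrt{n})$) is precisely how the paper's Theorem \ref{thm:estError} sharpens the rate to $O(\sigma\sqrt{S\log p/n})$.
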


According to the above result, the performance of sparse parameter estimation from Problem (\ref{eq:formulationLasso}) is guaranteed. Not only we will be able to recover the sparse unknown parameter in the high-dimensional scenario, but also we can bound the mean squared error (MES) of the resulting estimation in an order of  
$O(\sigma\sqrt{S\log p})$.
This upper-bound in fact is consistent with the MSE in \cite{candes2007dantzig}.

However, we may obtain a better bound if we add one more condition: the maximum absolute column sum of the matrix norm of the observed feature matrix $X$: $\|X\|_1 := \max_{i = 1}^{p}\|X_i\|_1$ is also in the order of $\sqrt{n}$.
This condition usually holds with sparse matrix, where there are a lot of $0$ entries, or even matrix with columns, where the entry values decay exponentially. This is almost the case for the fuselage shape control problem that the feature matrix $X=(BU_1, -BU_2)$ has a lot of near zero (magnitude of $10^{-6}$) entries due to the holding fixtures of the fuselages. 
Specifically, we state our conclusion in the following theorem:

\begin{theorem}
\label{thm:estError}
Let $\hat{\beta}$ denote the optimal solution to Problem (\ref{eq:formulationLasso}).
Suppose $\beta^\ast$ is any unknown sparse ground truth with $|supp(\beta)| \leq S$, and the  restricted strong convexity  assumption holds  for the observed feature matrix $X$ with $\gamma$ in $\mathcal{C}$. Let $\lambda_0 = O(\sigma\sqrt{  \frac{\log p}{n}})$.
If we further have that $\|X\|_1 = O(\sqrt{n})$, we have the following tighter non-asymptotic upper-bound on the estimation error:
\begin{equation}
\label{eq:thm_truth_feasible}
\|\hat{\beta} - \beta^\ast\|_2 \leq O(\sigma\sqrt{\frac{S\log p}{n}}).
\end{equation}

\end{theorem}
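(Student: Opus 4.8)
The plan is to follow the same skeleton as the proof of Theorem \ref{thm:estError_truth_feasible} and to isolate the single place where the extra hypothesis $\|X\|_1 = O(\sqrt{n})$ buys a factor of $\sqrt{n}$ over the generic case. Write $\nu = \hat{\beta} - \beta^\ast$ for the error vector and $\mathcal{S} = \mbox{supp}(\beta^\ast)$, so $|\mathcal{S}| \leq S$. First I would establish that $\nu$ lies in the cone $\mathcal{C}$ of Assumption \ref{assump:RSC}. By Lemma \ref{lemma:feasibleTruth}, $\beta^\ast$ is feasible for Problem (\ref{eq:formulationLasso}) with high probability; since $\hat{\beta}$ minimizes the $\ell_1$ objective over that same feasible set, optimality gives $\|\hat{\beta}\|_1 \leq \|\beta^\ast\|_1$. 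Splitting the $\ell_1$ norm over $\mathcal{S}$ and $\mathcal{S}^C$ and applying the triangle inequality in the standard way yields $\|\nu_{\mathcal{S}^C}\|_1 \leq \|\nu_{\mathcal{S}}\|_1$, i.e. $\nu \in \mathcal{C}$, and hence $\|\nu\|_1 \leq 2\|\nu_{\mathcal{S}}\|_1 \leq 2\sqrt{S}\,\|\nu\|_2$.

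Next I would control $\|X\nu\|_\infty$. Because both $\hat{\beta}$ and $\beta^\ast$ are feasible, the triangle inequality applied to the constraint gives
\[
\tfrac{1}{\sqrt{n}}\|X\nu\|_\infty = \tfrac{1}{\sqrt{n}}\|X\hat{\beta}-X\beta^\ast\|_\infty \leq \tfrac{1}{\sqrt{n}}\|Y-X\hat{\beta}\|_\infty + \tfrac{1}{\sqrt{n}}\|Y-X\beta^\ast\|_\infty \leq 2\lambda_0,
\]
so $\|X\nu\|_\infty \leq 2\sqrt{n}\,\lambda_0$. The heart of the argument is then to pass from this $\ell_\infty$ control of $X\nu$ to an $\ell_2$ bound on $X\nu$ without paying the full factor $\sqrt{n}$ of the naive inequality $\|X\nu\|_2 \leq \sqrt{n}\|X\nu\|_\infty$ used for the weaker Theorem \ref{thm:estError_truth_feasible}. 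I would do this with two applications of H\"older's inequality:
\[
\|X\nu\|_2^2 = \nu^T X^T X\nu \leq \|\nu\|_1\,\|X^T X\nu\|_\infty \leq \|\nu\|_1\,\|X\|_1\,\|X\nu\|_\infty,
\]
where the last step uses $\|X^T X\nu\|_\infty = \max_i |X_i^T(X\nu)| \leq \max_i \|X_i\|_1\,\|X\nu\|_\infty = \|X\|_1\,\|X\nu\|_\infty$. This is exactly where the hypothesis $\|X\|_1 = O(\sqrt{n})$ enters: in Theorem \ref{thm:estError_truth_feasible} one can only invoke the generic column-sum bound $\|X\|_1 = O(n)$, whereas the sharper control here shaves a factor of $\sqrt{n}$.

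To finish, I would combine these pieces with the restricted strong convexity assumption. Since $\nu \in \mathcal{C}$, the RSC condition gives $\gamma n\,\|\nu\|_2^2 \leq \|X\nu\|_2^2$. Chaining this with the H\"older bound above, $\|\nu\|_1 \leq 2\sqrt{S}\|\nu\|_2$, and $\|X\nu\|_\infty \leq 2\sqrt{n}\lambda_0$ gives
\[
\gamma n\,\|\nu\|_2^2 \leq 2\sqrt{S}\,\|\nu\|_2 \cdot \|X\|_1 \cdot 2\sqrt{n}\,\lambda_0,
\]
and dividing by $\|\nu\|_2$ (assuming it is nonzero) yields $\|\nu\|_2 \leq \frac{4\sqrt{S}\,\lambda_0\,\|X\|_1}{\gamma\sqrt{n}}$. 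Substituting $\lambda_0 = O(\sigma\sqrt{\log p/n})$ and $\|X\|_1 = O(\sqrt{n})$ collapses the right-hand side to $O(\sigma\sqrt{S\log p/n})$, the claimed bound. I expect the only delicate points to be bookkeeping: verifying cone membership cleanly so that RSC is applicable to $\nu$, and keeping track that the whole conclusion is conditional on the high-probability event of Lemma \ref{lemma:feasibleTruth} under which $\beta^\ast$ is feasible. The genuinely new ingredient relative to Theorem \ref{thm:estError_truth_feasible} is the double-H\"older step, and its payoff hinges entirely on the column-sum hypothesis $\|X\|_1 = O(\sqrt{n})$.
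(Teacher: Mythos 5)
Your proposal is correct and follows essentially the same route as the paper: the paper's proof of Theorem \ref{thm:estError} is exactly the proof of Theorem \ref{thm:estError_truth_feasible} (cone membership of $\nu$ via the optimality of $\hat{\beta}$, the double H\"older bound $\|X\nu\|_2^2 \leq \|\nu\|_1\|X\|_1\|X\nu\|_\infty$, feasibility of both $\hat{\beta}$ and $\beta^\ast$ to get $\frac{1}{\sqrt{n}}\|X\nu\|_\infty \leq 2\lambda_0$, and RSC) with the generic bound $\|\frac{1}{\sqrt{n}}X\|_1 = O(\sqrt{n})$ replaced by the assumed $\|\frac{1}{\sqrt{n}}X\|_1 = O(1)$, which is precisely your "double-H\"older plus column-sum hypothesis" step. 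Your bookkeeping, including the final chaining $\gamma n\|\nu\|_2^2 \leq 4\sqrt{S}\,\lambda_0\sqrt{n}\,\|X\|_1\|\nu\|_2$ and the conditioning on the high-probability event of Lemma \ref{lemma:feasibleTruth}, matches the paper's argument.
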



 The proofs to Theorem \ref{thm:estError_truth_feasible} and \ref{thm:estError} are in Appendix \ref{proof:thm:estError_truth_feasible} and \ref{proof:thm:estError}, respectively. 
 The above Theorem \ref{thm:estError_truth_feasible} and \ref{thm:estError} essentially state that with high probability, the optimal solution from our proposed formulation in (\ref{eq:formulationLasso}) is very close to the unknown ground truth, with their difference bounded by $O(\sigma\sqrt{\frac{S\log p}{n}})$.

\subsubsection{Prediction Error Upper-Bound}
\begin{theorem}
\label{thm:predictError}
Under the assumptions of Theorem \ref{thm:estError}, we have the following upper-bound for the prediction error:
\begin{equation}
\|X(\hat{\beta} - \beta^\ast)\|_2 \leq O(\sigma\sqrt{\frac{S\log p}{n}}).
\end{equation}
\end{theorem}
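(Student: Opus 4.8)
The plan is to control the prediction error $\|X\nu\|_2$, with $\nu := \hat{\beta} - \beta^\ast$, by relating it to the estimation error already bounded in Theorem \ref{thm:estError}, and exploiting the same max-column-sum hypothesis $\|X\|_1 = O(\sqrt{n})$. First I would record two facts that carry over from the estimation-error analysis. By Lemma \ref{lemma:feasibleTruth} the ground truth $\beta^\ast$ is feasible for Problem (\ref{eq:formulationLasso}) with high probability, and $\hat{\beta}$ is feasible by construction; hence both satisfy $\frac{1}{\sqrt{n}}\|Y - X\beta\|_\infty \leq \lambda_0$, and the triangle inequality yields $\frac{1}{\sqrt{n}}\|X\nu\|_\infty = \frac{1}{\sqrt{n}}\|(Y - X\beta^\ast) - (Y - X\hat{\beta})\|_\infty \leq 2\lambda_0$. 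Optimality of $\hat{\beta}$ together with feasibility of $\beta^\ast$ gives $\|\hat{\beta}\|_1 \leq \|\beta^\ast\|_1$, and the standard decomposition over $\mathcal{S} = \mbox{supp}(\beta^\ast)$ places $\nu$ in the cone $\mathcal{C}$, so that $\|\nu\|_1 \leq 2\|\nu_{\mathcal{S}}\|_1 \leq 2\sqrt{S}\,\|\nu\|_2$.

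The core step is to upper-bound the quadratic form. Writing $\|X\nu\|_2^2 = \langle \nu, X^TX\nu \rangle$ and applying H\"older's inequality gives $\|X\nu\|_2^2 \leq \|\nu\|_1\,\|X^TX\nu\|_\infty$. The second factor is where the max-column-sum condition enters: since $(X^TX\nu)_i = X_i^T(X\nu)$, we have $\|X^TX\nu\|_\infty \leq \|X\|_1\,\|X\nu\|_\infty$, and combining with the $\ell_\infty$ feasibility bound above, $\|X^TX\nu\|_\infty \leq \|X\|_1 \cdot 2\sqrt{n}\,\lambda_0 = O(\sqrt{n}) \cdot O(\sqrt{n}\,\lambda_0) = O(n\lambda_0)$. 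Substituting the cone bound on $\|\nu\|_1$, the estimation-error bound $\|\nu\|_2 \leq O(\sigma\sqrt{S\log p / n})$ from Theorem \ref{thm:estError}, and $\lambda_0 = O(\sigma\sqrt{\log p / n})$, I would collect the orders to obtain $\|X\nu\|_2^2 \leq 2\sqrt{S}\,\|\nu\|_2 \cdot O(n\lambda_0) = O(\sigma^2 S\log p)$, that is, the root-mean-square prediction error satisfies $\frac{1}{\sqrt{n}}\|X\nu\|_2 \leq O(\sigma\sqrt{S\log p / n})$, which is the asserted rate under the $\frac{1}{\sqrt{n}}$-normalization implicit in the feasibility constraint.

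The main obstacle, and the reason this theorem is stated under the hypotheses of Theorem \ref{thm:estError} rather than Theorem \ref{thm:estError_truth_feasible}, is that the restricted strong convexity assumption supplies only a \emph{lower} bound $\frac{1}{n}\langle \nu, X^TX\nu \rangle \geq \gamma\|\nu\|_2^2$ on the quadratic form, and therefore cannot be used to bound the prediction error from above. The only direct handle on $X\nu$ is the $\ell_\infty$ feasibility bound $\|X\nu\|_\infty \leq 2\sqrt{n}\lambda_0$; converting this into control of $\langle \nu, X^TX\nu \rangle$ forces the passage through $\|X^TX\nu\|_\infty$, which is controllable only once $\|X\|_1 = O(\sqrt{n})$ is available. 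A secondary point to check carefully is the bookkeeping of the $\sqrt{n}$ factors: because our constraint is on $\frac{1}{\sqrt{n}}\|Y - X\beta\|_\infty$ rather than on the gradient-type quantity $\|X^T(Y - X\beta)\|_\infty$ as in the Dantzig selector (\ref{eq:dantzig}), the column-sum hypothesis is precisely what bridges the loss constraint and $X^TX\nu$, and the resulting rate then coincides with that of Theorem \ref{thm:estError}.
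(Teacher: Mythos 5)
Your proposal is correct and follows essentially the same route as the paper's proof: H\"older's inequality $\|X\nu\|_2^2 \leq \|\nu\|_1\|X^TX\nu\|_\infty$, the column-sum hypothesis combined with feasibility of $\hat{\beta}$ and $\beta^\ast$ to control $\|X^TX\nu\|_\infty$, the cone bound $\|\nu\|_1 \leq 2\sqrt{S}\|\nu\|_2$, and finally the estimation-error bound of Theorem \ref{thm:estError}. Your closing remark about the $\frac{1}{\sqrt{n}}$-normalization is also consistent with the paper, whose proof in fact bounds $\frac{1}{n}\|X\nu\|_2^2$ rather than $\|X\nu\|_2^2$ itself.
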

The proof is simple, which is due to the by-product in the proof of Theorem \ref{thm:estError} and is shown in Appendix \ref{proof:thm:predictError}. This theorem tells that even if we are aiming to find a sparse solution to minimize the $\ell_\infty$ loss function, the resulting solution will still produce a good estimation in the sense of the mean squared error.


%

%
\section{Algorithm}\label{sec:alg}
In this section, we proposed an algorithm to efficiently solve the optimization problem (\ref{eq:objforce_l1}) and (\ref{eq:objforce_step2}) based on the ADMM algorithm \cite{boyd2011distributed}.  Compared to the gradient method that is sensitive to the choice of the step-size, the ADMM algorithm is robust to the parameter selection for algorithm convergence \cite{ghadimi2015optimal}. The ADMM algorithm is well suited to the convex optimization problem with separate objectives, which is the case of the optimization problem (\ref{eq:objforce_l1}), so we adopt ADMM algorithm for parameter estimation. Problem (\ref{eq:objforce_l1}) can be further written as  

\begin{equation}
\min_{y} (\sum_{i = 1}^2 f_i(y_i)+g(z)),\\
\end{equation}
where 

$$f_1(y_1)=\|B(\psi_2-\psi_1+U_2F_2-U_1F_1)\|_\infty=\|y_1\|_\infty,$$ 
$$f_2(y_2)=\lambda\|F_1\|_1+\lambda\|F_2\|_1=\lambda\|y_2\|_1,$$ 
$$y_2=[F_1', F_2']',$$
$$y_1=Ay_2+b, A=[-BU_1, BU_2], b=B(\psi_2-\psi_1), y=[y_1', y_2']'.$$
\[
  g(z)=
  \begin{cases*}
                                   \infty, & $Ez\neq b$, \\
                                   	0, & otherwise, 
 \end{cases*}  
\]
$$E=(I, -A), z=y.$$


According to the ADMM formula, the algorithm for the above problem can be written as
\begin{equation}
\ y_i^{k+1}=\mbox{prox}_{f_i}(z_i^{k}-u_i^{k}),\\
\end{equation}
\begin{equation}
\ z^{k+1}=\mbox{prox}_{g}(y^{k+1}+u^{k}),\\
\end{equation}
\begin{equation}
\ u^{k+1}=y^{k+1}+u^{k}-z^{k+1}.\\
\end{equation}
Here, prox$_f (x)$ is the proximal operator of $f$, which defined as 
\begin{equation}
\mbox{prox}_{f}(x)={\arg \min_{u}\left( f(u)+1/2 \|u-x\|_2^2 \right)}.\\
\end{equation}
The proximal operator prox$_{f_i}$ and prox$_g$ can be analytically derived if we formulated the optimization problem in this way.   From Chapter 6 of \cite{beck2017first}, the proximal operators of functions ${f_i}(y_i)$ and $g(z)$ can be derived as 
\begin{equation}
\mbox{prox}_{f_1}(y_1)=y_1-P_{B_ {\|\cdot\|_1 [0,1]}}(y_1),
\end{equation}
\begin{equation}
\mbox{prox}_{f_2}(y_2)=S_{\lambda/\rho}(y_2),
\end{equation}
\begin{equation}
\mbox{prox}_{g}(z)=z-E'(EE')^{-1}(Ez-b),
\end{equation}
where $P_{B_ {\|\cdot\|_1 [0,1]}}(\cdot)$ is the projection onto the unit ball $G=B_ {\|\cdot\|_1 [0,1]}=\{x\in \mathbb{R}^n:\|x_1\|_1\leq1\}$, and 
\[
  P_{B_{\|\cdot\|_1}[0,1]}(y_1)=
  \begin{cases*}
                                   y_1, & $\|y_1\|_1 \leq 1$, \\
                                   	S_{\lambda^\ast}(y_1), & $\|y_1\|_1 > 1$,
 \end{cases*}
\]
where $\lambda^*$ is any positive root of the non-increasing function $\varphi(\lambda)=\|S_\lambda(y_1)\|_1-1$. $S_\lambda$ is the soft thresholding operator of $\|\cdot\|_1$, which is 
\begin{equation}
S_\lambda(v)=(v-\lambda)_+ - (-v-\lambda)_+,
\end{equation}
where $(x)_+$ is short for $\max\{x,0\}$. 

Due to the convexity of Problem (\ref{eq:objforce_l1}), the convergence of the ADMM is guaranteed given a reference point and we can obtain the global optimum of the optimization Problem (\ref{eq:objforce_l1}). In practice, Boyd et al.  \cite{boyd2011distributed} suggested that a reasonable termination of the ADMM is that both the primal and dual residuals should be small, i.e., 
\begin{equation}
\|r^k\|_2\leq e_1,  \|s^k\|_2\leq e_2,
\end{equation}
where $r^k=y^k-z^k$ is the primal residual, and $s^k=-\rho(z^k-z^{k-1})$  is the dual residual. $e_1>0$ and $e_2>0$ are the tolerances, which can be chosen by using an absolute criterion and a relative criterion, i.e., we have 
\begin{equation}
e_1=\sqrt{m}e_3+e_4\max\{\|z^k\|_2,\|y^k\|_2\},
\end{equation}
\begin{equation}
e_2=\sqrt{m}e_3+e_4\|\rho u^k\|_2,
\end{equation}
where $m=m_1+m_2$, $e_3>0$ and $e_4>0$ are an absolute tolerance and a relative tolerance, respectively. The following algorithm is proposed for estimations of $F_1$ and $F_2$, and $K$ is the maximum number of iterations.

\begin{algorithm}
\caption{The ADMM of estimating $y_2=[F_1', F_2']'$ }\label{alg:admm1}
\begin{algorithmic}[1]
\STATE Input: $\rho, \lambda, e_3, e_4, m_1,m_2, n, B, U_1, \psi_1, U_2, \psi_2, A, b$
\STATE Initialize $z^0, u^0$
\STATE \emph{loop}:
\FOR {$k=1: K$}
\STATE $u_1^k=\{u_i^k\}_{i=1,...,2n},u_2^k=\{u_i^k\}_{i=1+2n,...,2n+m_1+m_2}$,
\STATE $z_1^k=\{z_i^k\}_{i=1,...,2n},z_2^k=\{z_i^k\}_{i=1+2n,...,2n+m_1+m_2}$,
\STATE $y_1^{k+1}=z_1^k-u_1^k-P_{B_ {\|\cdot\|_1 [0,1]}}(z_1^k-u_1^k)$,
\STATE $y_2^{k+1}=S_{\lambda/\rho}(z_2^k-u_2^k)$,
\STATE $y^{k+1}=[(y_1^{k+1})', (y_2^{k+1})']'$,
\STATE $z^{k+1}=y^{k+1}+u^k-E'(EE')^{-1}(Ez-b)$,
\STATE $u^{k+1}=u^k+y^{k+1}-z^{k+1}$,  
\STATE $r^{k+1}=y^{k+1}-z^{k+1}$,
\STATE $s^{k+1}=\rho(z^{k+1}-z^k)$,
\STATE $e_1=\sqrt{m}e_3+e_4\max\{\|z^{k+1}\|_2,\|y^{k+1}\|_2\}$,
\STATE $e_2=\sqrt{m}e_3+e_4\|\rho u^{k+1}\|_2$,
\IF{$\|r^{k+1}\|_2\leq e_1$, and $\|s^{k+1}\|_2\leq e_2$}
\RETURN $y_2$
\ENDIF
\ENDFOR
\end{algorithmic}
\end{algorithm}

Since we have the sparsity requirement of $\|F_1\|_0+\|F_2\|_0\leq M$, the $\lambda$ value needs some investigations. Binary search can be used for computing for this value to meet the aforementioned requirement. The lower bound of the search range is trival, which is 0, but the upper bound needs some theoretical investigations. In the following Proposition \ref{prop:maxLambda}, we provide an upper bound of the search range. 
\begin{proposition}
\label{prop:maxLambda}
If we choose the tuning parameter $\lambda$ 
in the Problem (\ref{eq:objforce_l1}) 
such that $\lambda > \|X\|_\infty$, $X=[BU_1,-BU_2]$, there exists one unique global minimum to  (\ref{eq:objforce_l1}) 
, with $\hat{\beta}=[F_1', F_2']'= 0$.
\end{proposition}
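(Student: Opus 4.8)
The plan is to recast the objective in the statistical notation of Problem~(\ref{eq:formulation1}) and then show directly that any nonzero perturbation strictly increases the objective. Writing $Y = B(\psi_2-\psi_1)$, $X=[BU_1,-BU_2]$, and $\beta=[F_1',F_2']'$, the gap becomes $B\Delta = Y - X\beta$ and the penalty $\lambda\|F_1\|_1+\lambda\|F_2\|_1 = \lambda\|\beta\|_1$, so the objective of~(\ref{eq:objforce_l1}) is exactly
\[
J(\beta) = \|Y-X\beta\|_\infty + \lambda\|\beta\|_1 .
\]
The candidate minimizer is $\beta=0$, where $J(0)=\|Y\|_\infty$. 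I would prove the claim by lower-bounding $J(\beta)-J(0)$ for arbitrary $\beta$, which simultaneously yields existence, optimality, and uniqueness in one stroke — this is cleaner here than invoking subdifferential optimality conditions (though the latter would also work, via $0 \in \partial J(0)$ reducing to the existence of $g$ with $\|g\|_1\le 1$ and $\|X^Tg\|_\infty \le \lambda$).

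The two ingredients are both elementary. First, by the triangle inequality $\|Y\|_\infty \le \|Y-X\beta\|_\infty + \|X\beta\|_\infty$, which rearranges to the reverse bound $\|Y-X\beta\|_\infty \ge \|Y\|_\infty - \|X\beta\|_\infty$. Second, I would establish the H\"older-type estimate relating the entrywise max norm of $X$ to its action on $\beta$: for each row $i$, $|(X\beta)_i| = |\sum_j X_{ij}\beta_j| \le \sum_j |X_{ij}||\beta_j| \le \|X\|_\infty \|\beta\|_1$, and taking the maximum over $i$ gives
\[
\|X\beta\|_\infty \le \|X\|_\infty\,\|\beta\|_1 .
\]

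Combining the two ingredients yields
\[
J(\beta) \;\ge\; \|Y\|_\infty - \|X\beta\|_\infty + \lambda\|\beta\|_1 \;\ge\; J(0) + \bigl(\lambda - \|X\|_\infty\bigr)\|\beta\|_1 .
\]
Under the hypothesis $\lambda > \|X\|_\infty$, the factor $\lambda-\|X\|_\infty$ is strictly positive, so $J(\beta) \ge J(0)$ for every $\beta$, with strict inequality whenever $\|\beta\|_1 > 0$, i.e.\ whenever $\beta\neq 0$. This shows $\beta=0$ attains the minimum and is the \emph{unique} minimizer, which is exactly the assertion $\hat\beta=[F_1',F_2']'=0$. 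I do not anticipate a genuine obstacle here; the only point requiring a little care is to state the norm inequality $\|X\beta\|_\infty \le \|X\|_\infty\|\beta\|_1$ with the paper's convention that $\|X\|_\infty$ is the entrywise maximum (not an induced operator norm), so that the bound is applied consistently. The strictness of the hypothesis $\lambda>\|X\|_\infty$ (rather than $\ge$) is precisely what upgrades optimality to uniqueness.
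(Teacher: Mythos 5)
Your proof is correct, and it takes a genuinely different route from the paper's. The paper argues via local first-order conditions: it first treats the scalar case $p=1$, fixes an arbitrary nonzero point $\beta_0$, passes to a small neighborhood on which the $\ell_\infty$ term is realized by a single maximizing coordinate $i$, and shows from the sign of $\partial f(\beta) = \lambda\,\mbox{sign}(\beta) - \mbox{sign}(Y_i - X_i\beta)X_i$ that the objective is strictly increasing on one side of the origin and strictly decreasing on the other; it then repeats this coordinate-and-sign bookkeeping in a neighborhood $\mathcal{N}$ for general $p$, concluding that $\beta=0$ is the unique global minimum. Your argument replaces all of this with a single global estimate, $J(\beta) \ge J(0) + \bigl(\lambda - \|X\|_\infty\bigr)\|\beta\|_1$, obtained from the reverse triangle inequality and the bound $\|X\beta\|_\infty \le \|X\|_\infty\|\beta\|_1$, which you correctly state for the paper's entrywise max norm convention. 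What you gain: no case analysis over signs or dimensions, no need to track which index attains the maximum within a neighborhood (a point the paper's proof handles somewhat informally), no appeal to convexity or stationarity at all, and a quantitative conclusion --- the objective grows at least linearly in $\|\beta\|_1$ away from the origin --- from which optimality and uniqueness follow simultaneously. What the paper's route offers in exchange is the first-order/subdifferential viewpoint, which is the natural tool if one wants the sharp threshold on $\lambda$ (the smallest $\lambda$ making $\hat\beta = 0$ optimal, which is what the binary search over $\lambda$ in Section \ref{sec:alg} implicitly targets) rather than the sufficient condition $\lambda > \|X\|_\infty$. One small caution on your parenthetical alternative: the condition $0 \in \partial J(0)$ requires a subgradient $g$ of $\|\cdot\|_\infty$ at $Y$, which must additionally satisfy the alignment condition $g^T Y = \|Y\|_\infty$ (support on the maximizing coordinates of $Y$), not merely $\|g\|_1 \le 1$; since this was an aside and not part of your actual argument, it does not affect the validity of your proof.
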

The main idea to prove this proposition is that at any point that is not the origin, we can find a direction, along which, the objective value decreases.
Due to space limit, we postpone the proof into Appendix \ref{proof:prop:maxLambda}.

According to  Proposition \ref{prop:maxLambda}, we can set $ \|X\|_\infty$ as the upper bound of the search range to find the  $\lambda$ value that meets the requirement $\|F_1\|_0+\|F_2\|_0=M$. The nonzero components of $F_1$ and $F_2$ correspond to the optimal locations of actuators for shape control of fuselage 1 and fuselage 2, respectively. Then, the forces from each actuator can be found by solving the optimization problem (\ref{eq:objforce_step2}). Similarly, the ADMM algorithm can also be used in (\ref{eq:objforce_step2}) to get the optimal solution by the Algorithm \ref{alg:admm2} in Appendix \ref{add2}.

\section{Case study} \label{sec:numeric}


In this section, we use the fuselage assembly process to validate the proposed methodology. 
We  present our detailed experiment setting and report our results in Subsections \ref{subsec:numeric} and \ref{subsec:simulation}, respectively.

\subsection{Numerical setting} \label{subsec:numeric}
In this case study, we use an FEA model \cite{wen2018feasibility} to generate data and validate the proposed methodology. The FEA model used in this case study has been validated with the experimental data, and more details about FEA model can be found in \cite{wen2018feasibility, yue2018surrogate}. For shape adjustments of single fuselage, the method from \cite{du2019fuselage} achieves the best control performance in terms of reduction of shape deviations after adjustments and the forces applied during shape adjustments. 

In order to make a fair comparison between our methods and the best result of composite fuselage shape control in current literature, we use the same FEA data from \cite{du2019fuselage}. 
There are 20 incoming fuselages with different dimensional variations. Based on engineering practice, in each pair of the fuselages, $m_1=m_2=18$ feasible actuators are equally placed from -12 degrees to 192 degrees in the lower part of each fuselage. More details about the data generation can be found in \cite{du2019fuselage}.  $M=M_1+M_2=18$  actuators are used for shape adjustments of two fuselages in this case study. The number of measurement points along the interface of the pair of fuselages are $n_1=n_2=n=182$, weight matrix is set to be $B=\mbox{diag}(1/n)$, and $U_1=U_2$ are chosen to be the same with \cite{du2019fuselage} for comparison purpose. 

In practice, some elements of displacement matrix $U_i,i=1,2,$ are very small due to the structures of composite fuselages, such as the elements of $U_i$ near the fixture. Specifically, we checked our displacement matrix $U_i$  and find that the element scales decays almost exponentially for each column, which satisfies our column norm assumption in Theorem \ref{thm:estError}.
To achieve a better computational performance, such as reducing floating point errors \cite{demmel1997applied}, we multiply a large constant number $L_N$ in the objective function $ \|B(\psi_2+U_2F_2-(\psi_1+U_1F_1))\|_\infty $ and $ \|B(\psi_2+{U_2}^c{F_2}^c-(\psi_1+{U_1}^c{F_1}^c))\|_\infty $ in the optimization Problems (\ref{eq:objforce_l1}) and (\ref{eq:objforce_step2}), which does not change the optimal solution. In this way, the computational problems induced by $U_i$ matrix can be alleviated in real implementations. In this case study, we set $L_N=10^7$. In terms of the ADMM algorithm, we set $\rho=1$, $e_3=10^{-6}$,  $e_4=10^{-5}$.

\subsection{Results of the proposed method}\label{subsec:simulation}
In this case study, we randomly pick up two different fuselages from 20 incoming fuselages and have 50 replications. Consequently, we have 50 pairs of fuselages for assembly. According to the engineering practice, we use root mean square gap (RMSG), maximum gap (MG), and maximum force (MF) to evaluate the control results. These quantities are defined as follows: 
\begin{equation}
RMSG := \frac{1}{n}\sqrt{\Delta'\cdot\Delta},
\end{equation}

\begin{equation}
MG := \sqrt{\|\Delta_Y \ast\Delta_Y+\Delta_Z \ast\Delta_Z\|_\infty},
\end{equation}


\begin{equation}
MF_i := \|F_i\|_\infty, i = 1, 2,
\end{equation}
where `$\ast$' indicates the Hadamard product, and $\Delta_Y\in \mathbb{R}^ n$ and $\Delta_Z\in \mathbb{R}^ n$ are the first and last $n$ elements of $\Delta$, respectively. Thus,  $MG$ captures the maximum gap for two fuselage assembly. Notably, we only evaluate the $MF_i$ for the safety purpose during the fuselage shape control, since large force may destroy the fuselage structure during fuselage shape control. 
The control results of the proposed method on these 50 pairs of fuselages assembly are shown in Table \ref{table:control}. Since we care more about the maximum gap after control and maximium force used for shape control in fuselage assebly process, we also listed the  maximum (max) of RMSG, MG and MF for these 50 pairs. The results show that the proposed method perfoms well in terms of the maximum gap by using relatively smaller forces, which is acceptable for fuselage assembly in practice. The following Figure \ref{fig:gap_control} and Figure \ref{fig:max_control}  show the box-plots of the fuselage gap after control and the maximum forces used for shape control in these 50 pairs. 

\begin{table}[htp]
\begin{center}
\begin{tabular}{|c|c|c|c|c|}
\hline
~&RMG (inches)&MG (inches)&$MF_1$ (lbs)&$MF_2$ (lbs)\\
\hline
Mean&0.0014&0.0034&304.73&307.42\\
Max&0.0023&0.0060&553.83&776.86\\
Std&0.0004&0.0011&101.43&131.01\\
\hline
\end{tabular}
\end{center}
\caption{Control results of our method on 50 pairs of fuselages. }
\label{table:control}
\end{table}%

\begin{figure}[htbp]
    \centering
    \begin{subfigure}[b]{0.48\textwidth}
        \includegraphics[width=\textwidth]{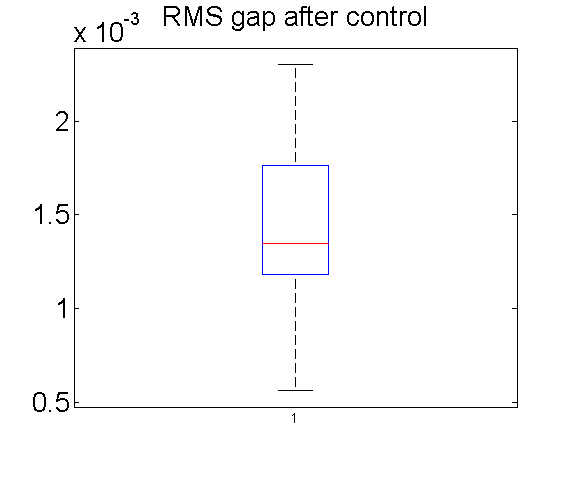}
    \end{subfigure}
    ~ 
    \begin{subfigure}[b]{0.48\textwidth}
        \includegraphics[width=\textwidth]{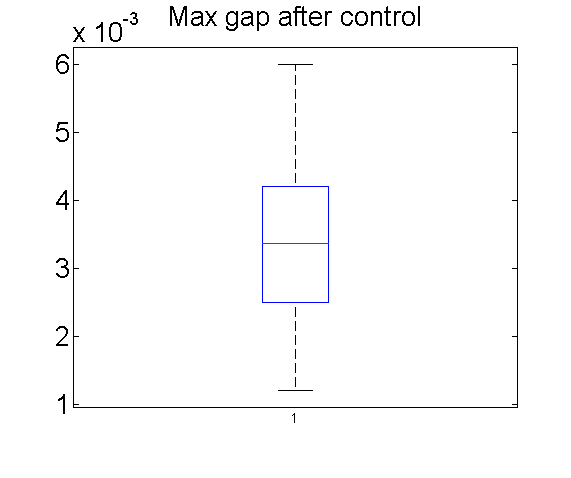}
    \end{subfigure}
    \caption{The boxplots of RMS gap and Max gap of 50 pairs of fuselages after control by the proposed method.}
    \label{fig:gap_control}
\end{figure}

\begin{figure}[htbp]
    \centering
    \begin{subfigure}[b]{0.48\textwidth}
        \includegraphics[width=\textwidth]{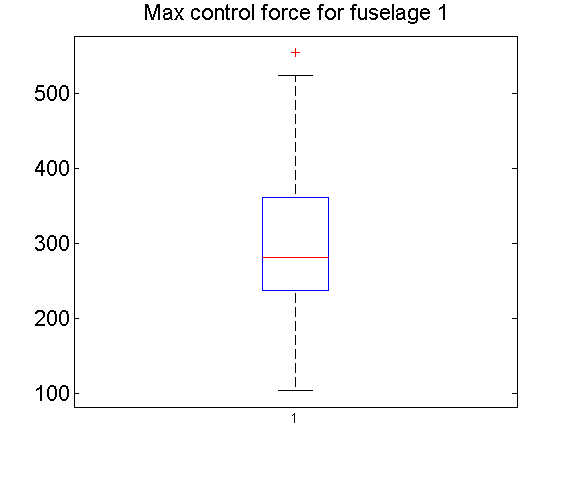}
    \end{subfigure}
    ~ 
    \begin{subfigure}[b]{0.48\textwidth}
        \includegraphics[width=\textwidth]{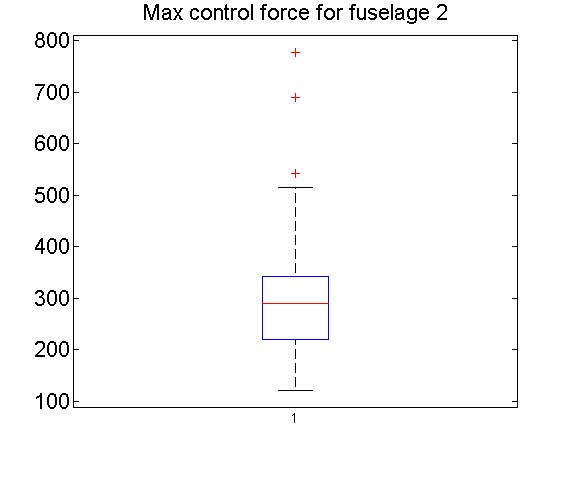}
    \end{subfigure}
    \caption{The boxplots of max control forces for fuselage 1 and fuselage 2 by the proposed method.}
    \label{fig:max_control}
\end{figure}

\subsubsection{Comparisons}\label{subsec:compare}
In this subsection, we compare the proposed method with the current practice \cite{du2019fuselage}, which is to force the incoming fuselage into the design shape in terms of the $\ell_2$ loss, and then assemble. We randomly choose one pair of fuselages to compare the performance of the proposed method with the one in \cite{du2019fuselage}. Figure \ref{fig:dimension1} shows the incoming fuselage dimensions at the interface between two fuselages and the proposed acutator locations of fuselage 1 and fuselage 2. Similarly, we also show the incoming fuselage dimensions and the acutator locations from \cite{du2019fuselage}  of fuselage 1 and fuselage 2 in Figure  \ref{fig:dimension2}. To show the incoming dimensional variations more clearly, we magnify the distortions 150 times since the dimensional error is much smaller than the dimension of the fuselage. The marked round points are the locations of actuators. As shown in the Figure \ref{fig:dimension1} and Figure \ref{fig:dimension2}, the optimal actuator locations of the proposed method are different from  \cite{du2019fuselage} . This is because our method aims to minimize the maximium gap between the pair of assembly fuselages, while \cite{du2019fuselage} aims to control single fuselage into design shape and then assembly. 
We also show the shape adjustments of the proposed method and the one in  \cite{du2019fuselage} for this pair of fuselages in Figure  \ref{fig:adjustment1} and Figure  \ref{fig:adjustment2}, respectively. The black dots show the intial dimensional gap between two fuselages in Y and Z directions; the green dots and blue dots show the adjustment from fuselage 1 and fuselage 2 in Y and Z directions; the red dots are the final adjusted gap after control in Y and Z directions. As shown in Figure  \ref{fig:adjustment1}, the shape adjustments of our method reduces the intial gap and successfully adjusts the shape. Compared to our method, the method from \cite{du2019fuselage} adjusts the shape of incoming fuselage seperately (shown in Figure  \ref{fig:adjustment2}) since the goal of this method is to adjust the shape of the fuselage into the design shape. The control performance is not as good as the proposed method since we can see larger variations around 0 in Figure  \ref{fig:adjustment2}. This pair of fuselages illustrates the effectiveness of the proposed method in detail.

\begin{figure}[htbp]
    \centering
    \begin{subfigure}[b]{0.48\textwidth}
        \includegraphics[width=\textwidth]{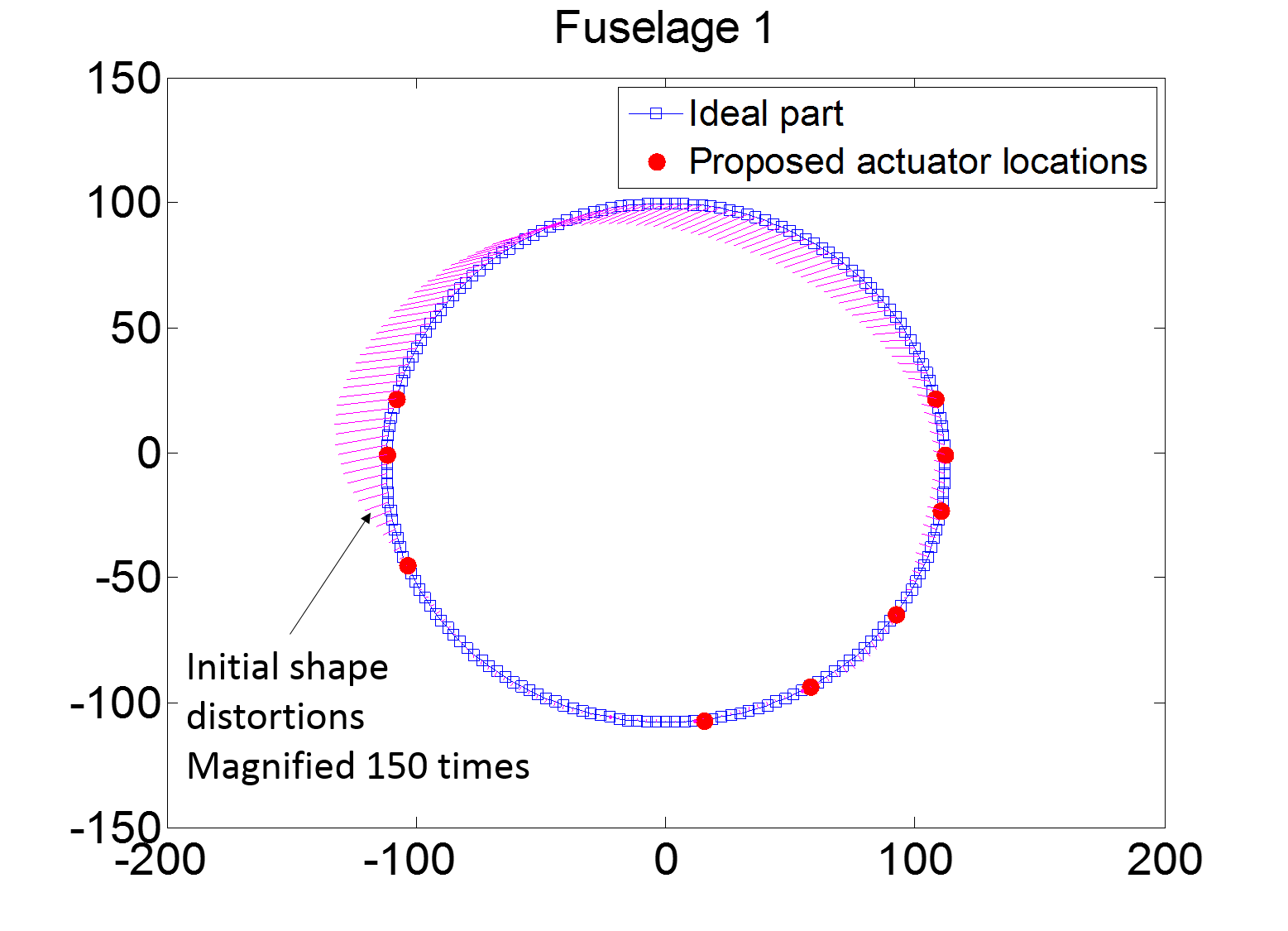}
    \end{subfigure}
    ~ 
    \begin{subfigure}[b]{0.48\textwidth}
        \includegraphics[width=\textwidth]{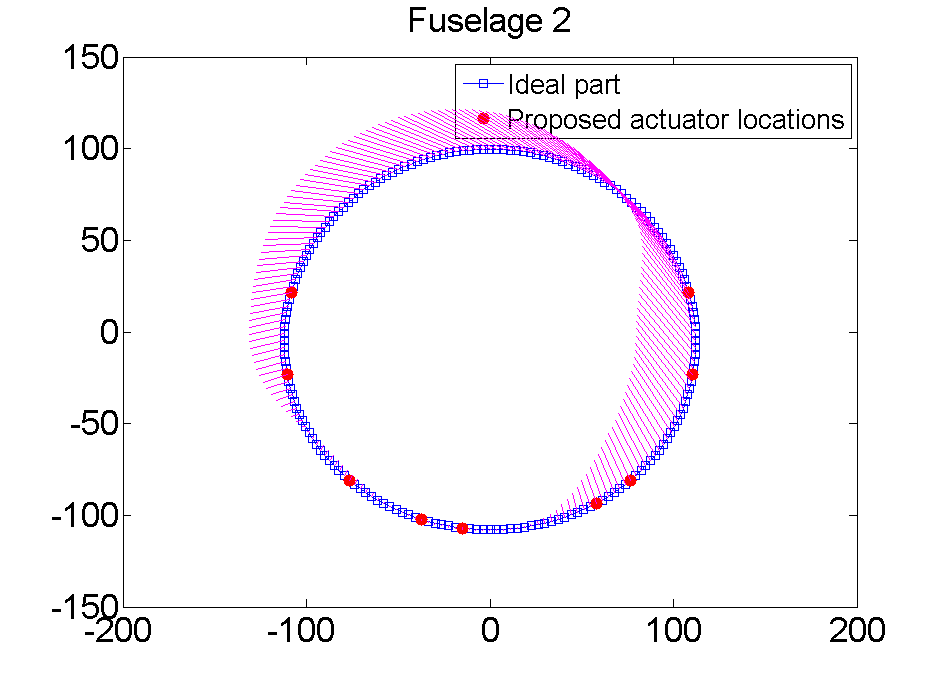}
    \end{subfigure}
    \caption{Incoming fuselage dimensions and the proposed actuator locations for fuselage 1 and fuselage 2.}
    \label{fig:dimension1}
\end{figure}

\begin{figure}[htbp]
    \centering
    \begin{subfigure}[b]{0.48\textwidth}
        \includegraphics[width=\textwidth]{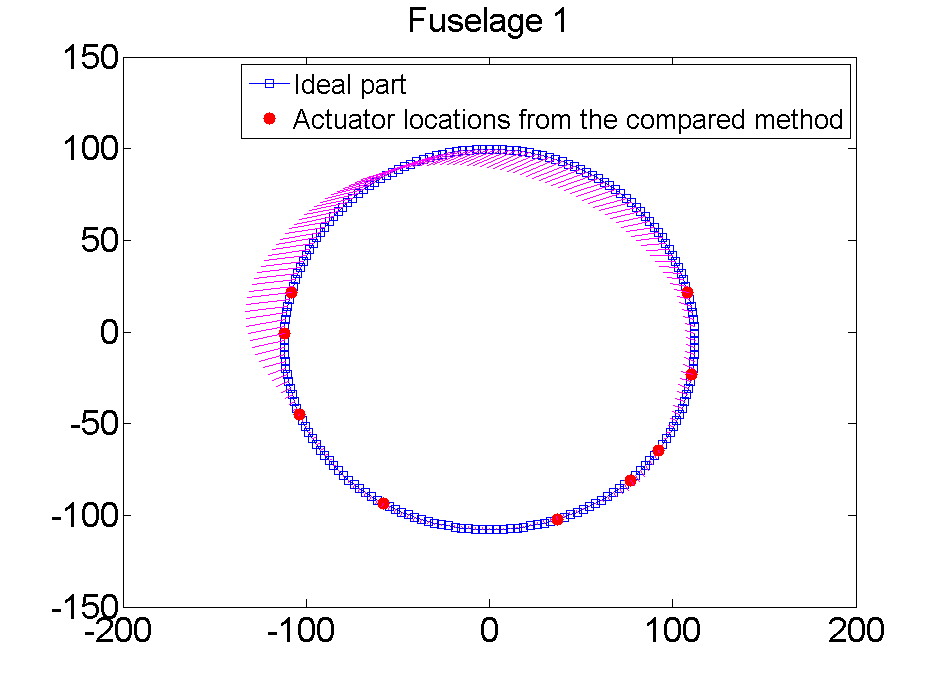}
    \end{subfigure}
    ~ 
    \begin{subfigure}[b]{0.48\textwidth}
        \includegraphics[width=\textwidth]{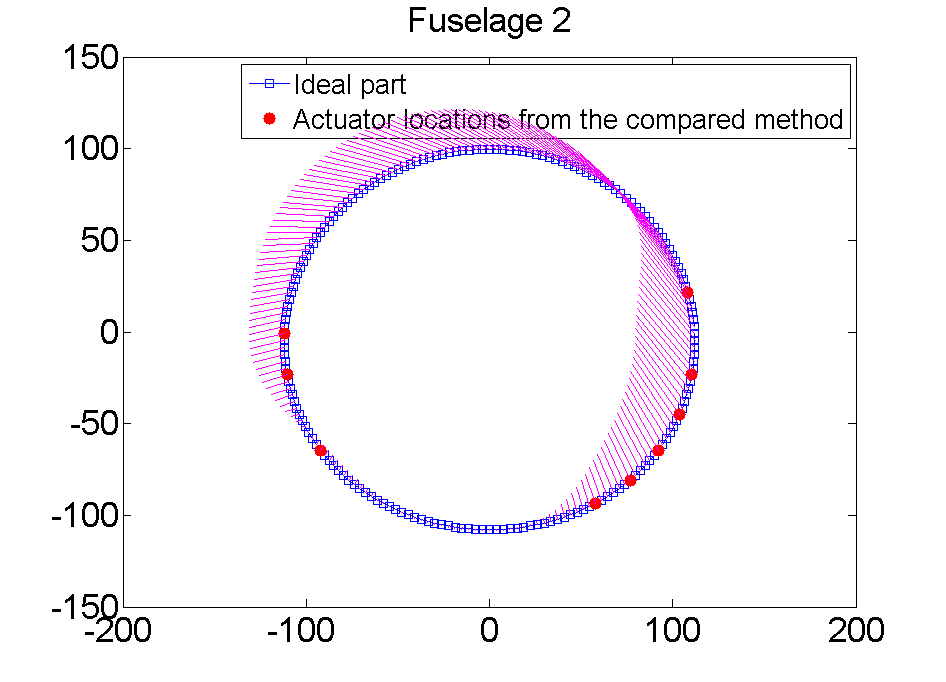}
    \end{subfigure}
    \caption{Incoming fuselage dimensions and the actuator locations from \cite{du2019fuselage} for fuselage 1 and fuselage 2.}
    \label{fig:dimension2}
\end{figure}

In order to better illustrate the improvement of the proposed method, we calculated the improvements on the gap after shape adjustments, and use statistical t test to test the significance of the improvement on the gap. The improvement results on mean and standard deviation (std) of the gap after adjustments are listed in Table \ref{table:gap_reduction_MG}. Notably, the  improvement means the control results via current practice \cite{du2019fuselage} minus the results of the proposed method. The null hypothesis $H_0$: the improvement comes from a distribution with mean zero, i.e., no significant improvement. The alternative hypothesis $H_1$: the mean of improvement is greater than 0 (right-tailed test), which indicates the proposed method reduces the dimensional gap after control. The $p-$value is also listed in Table \ref{table:gap_reduction_MG}. As shown in Table \ref{table:gap_reduction_MG}, compared with  \cite{du2019fuselage}, the proposed method significantly reduces the maximum gap (MG) and RMS gap (RMSG) for two fuselage assembly. The box-plots of MG and RMSG improvement are shown in Figure \ref{fig:gap_improve}.  



\begin{figure}[htbp]
    \centering
    \begin{subfigure}[b]{0.38\textwidth}
        \includegraphics[width=\textwidth]{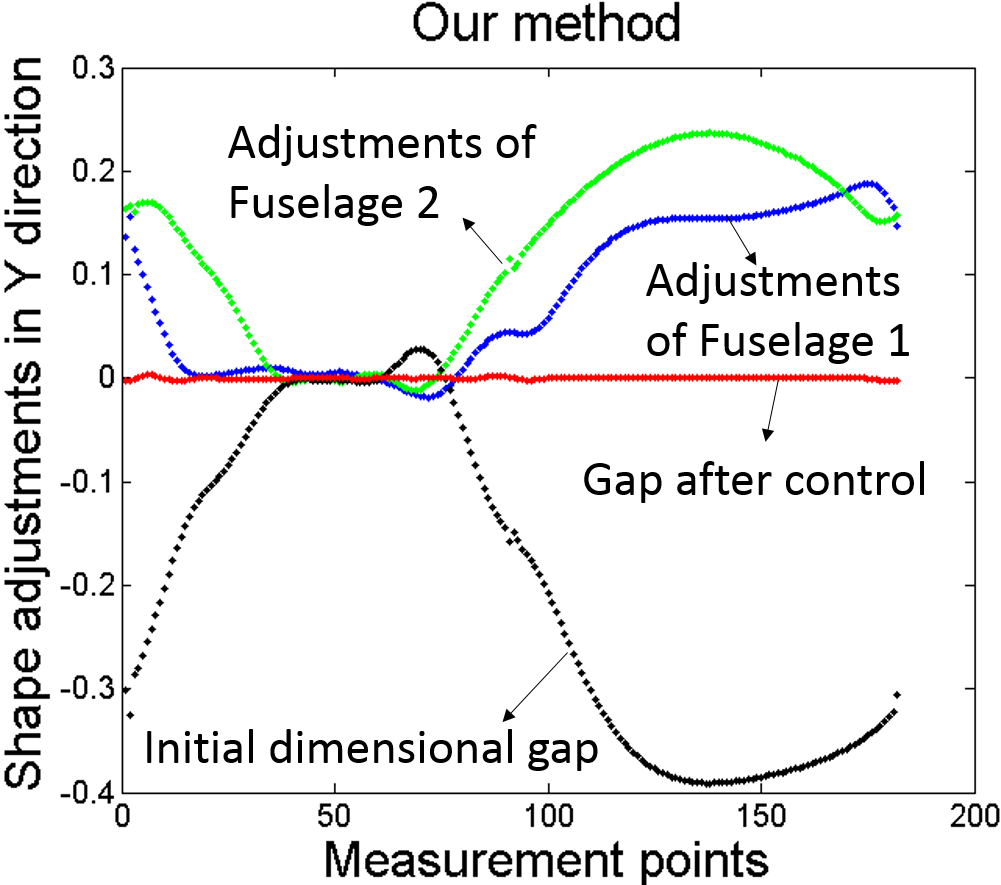}
    \end{subfigure}
    ~ 
    \begin{subfigure}[b]{0.38\textwidth}
        \includegraphics[width=\textwidth]{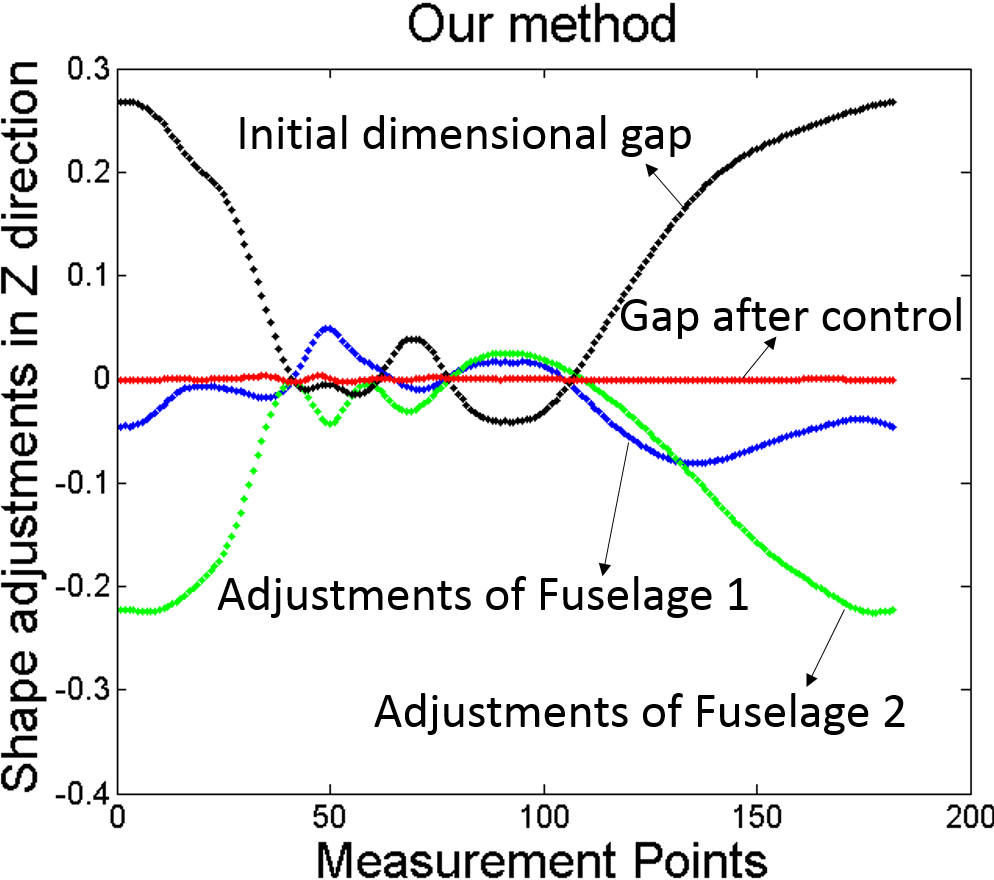}
    \end{subfigure}
    \caption{Incoming fuselage dimensions and the actuator locations from \cite{du2019fuselage} for fuselage 1 and fuselage 2.}
    \label{fig:adjustment1}
\end{figure}

\begin{figure}[htbp]
    \centering
    \begin{subfigure}[b]{0.38\textwidth}
        \includegraphics[width=\textwidth]{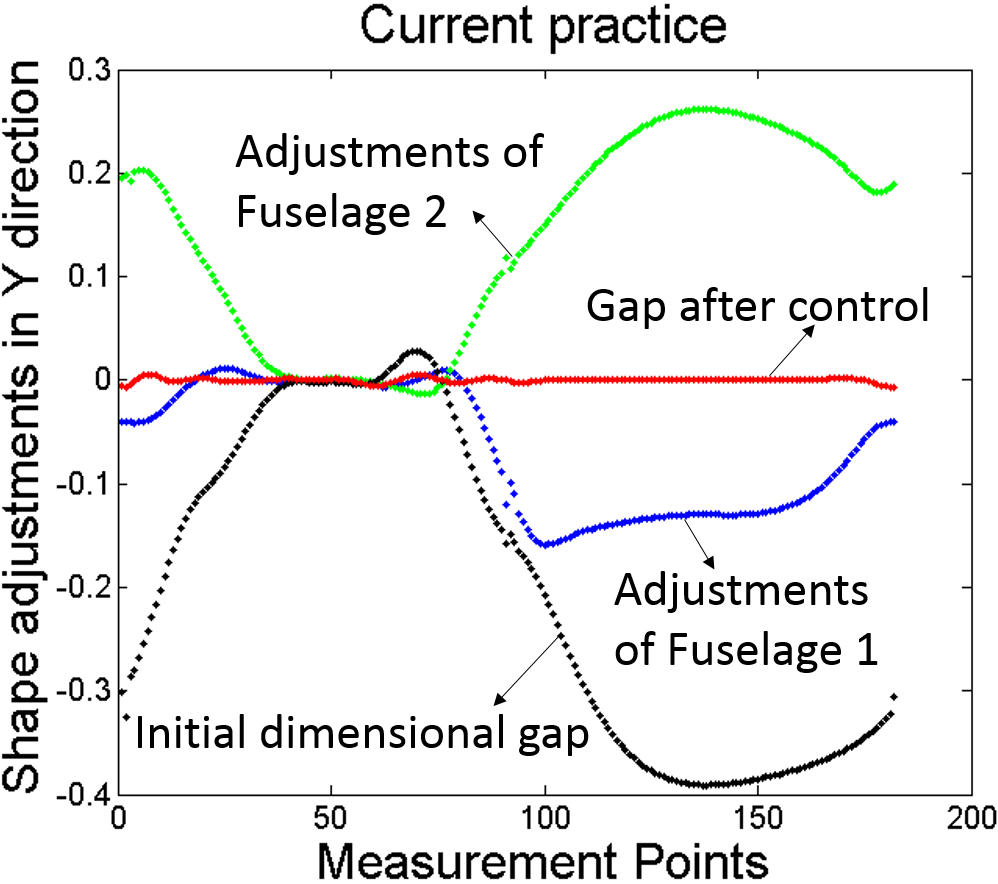}
    \end{subfigure}
    ~ 
    \begin{subfigure}[b]{0.38\textwidth}
        \includegraphics[width=\textwidth]{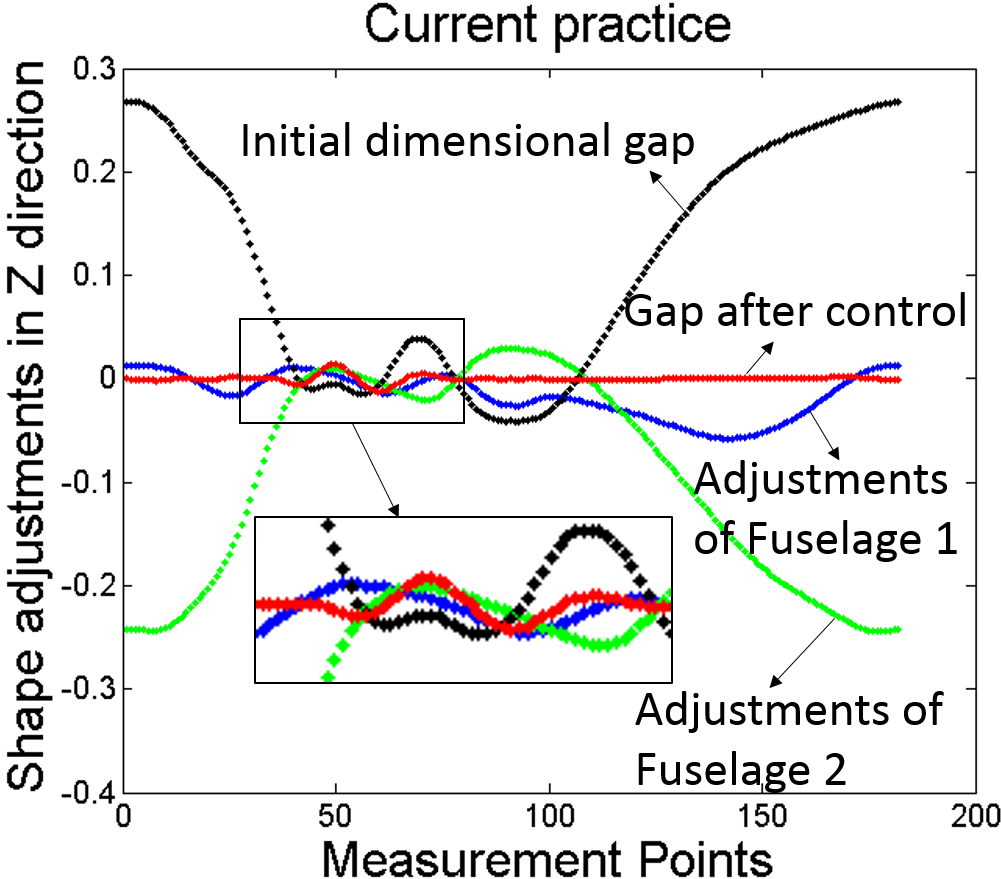}
    \end{subfigure}
    \caption{Incoming fuselage dimensions and the actuator locations from \cite{du2019fuselage} for fuselage 1 and fuselage 2.}
    \label{fig:adjustment2}
\end{figure}

\begin{table}[htp]
\begin{center}
\begin{tabular}{c|c|c|c|c|c}
\hline
\multicolumn{3}{c}{MG Improvement (inches)} &\multicolumn{3}{c}{RMSG Improvement  (inches)}\\
\hline
Mean &Std&P-value &Mean &Std&P-value \\
0.0154&0.0074&$6.19 \times 10^{-20}$&0.0048&0.0023&$5.40\times 10^{-20}$\\
\hline
\end{tabular}
\end{center}
\caption{Gap reduction of our method compared with method from \cite{du2019fuselage}. }
\label{table:gap_reduction_MG}
\end{table}

\begin{figure}[htbp]
    \centering
    \begin{subfigure}[b]{0.48\textwidth}
        \includegraphics[width=\textwidth]{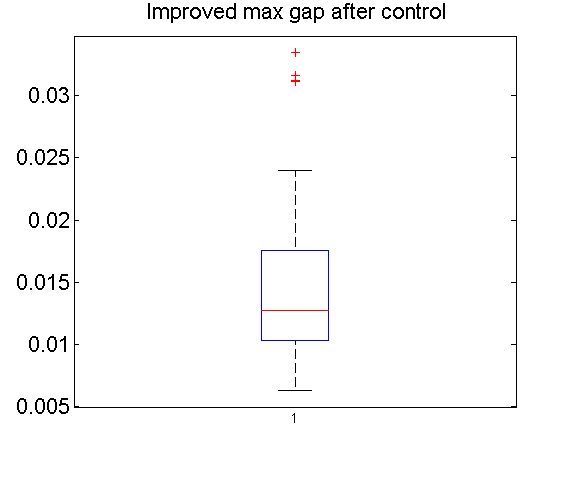}
    \end{subfigure}
    ~ 
    \begin{subfigure}[b]{0.48\textwidth}
        \includegraphics[width=\textwidth]{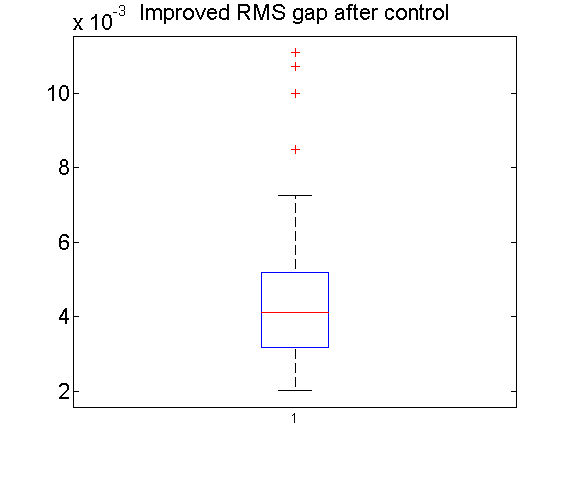}
    \end{subfigure}
    \caption{The boxplots of improved max and RMS gap after control compared with method from \cite{du2019fuselage}.}
    \label{fig:gap_improve}
\end{figure}
%


\subsection{Discussions}\label{subsec:discussion}
A contribution in this paper is that we  propose an optimal shape control strategy for composite fuselage assembly. In practice, the main concern during the composite fuselage assembly process is the maximum of the dimensional gap. Consequently, the objective of this paper is to minimize the maximum gap in composite fuselage shape control instead of using the square error as in \cite{du2019fuselage}. The goal of \cite{du2019fuselage} is to achieve the optimal actuator placement for shape control of single fuselage, and the target shape after shape control is the design shape. In the case study, our method performs much better than the method in \cite{du2019fuselage} in terms of the maximum gap after the shape control. Also, by considering the initial gap between the pair of fuselages, the resulting RMSG is also smaller than those in \cite{du2019fuselage}.

\section{Conclusions}\label{sec:conclusion}
This paper proposes an optimal shape control strategy for composite fuselage assembly. Due to natural dimensional variations of fuselages, there is a gap on the interface of two fuselages before assembly. The current practice adjusted the shape of each fuselage to the design shape in terms of the $\ell_2$ loss and then assemble, which is not optimal for the case of two fuselage assembly. Our contribution is to consider the initial gap of the pair of incoming fuselages and propose a sparse learning model, which aims to minimize the maximum gap ($\ell_\infty$ loss) after shape control. 
From the statistical view, the proposed model is $\ell_1$ sparsity-driven  penalized $\ell_\infty$ loss linear regression.
Under the linear model with light tailed sub-Gaussian errors assumption, we provide the non-asymptotic upper-bound of the estimation error, measurement error, and prediction error.
The real case studies of composite fuselage assembly validate the proposed method according to its control performance. 
Compared with the current literature, the case study shows that our method achieves significant reduction of the maximum gap after control. Notably, although our method is demonstrated for optimal shape control in composite fuselage assembly process, the methodology can be extended for optimal shape control of assembly process in other structures. 

\bibliographystyle{plain}

\bibliography{refs}

\newcommand{\noop}[1]{}
\begin{thebibliography}{10}

\bibitem{alecu2006wavelet}
Alin Alecu, Adrian Munteanu, Jan~PH Cornelis, and Peter Schelkens.
\newblock Wavelet-based scalable l-infinity-oriented compression.
\newblock {\em IEEE Transactions on Image Processing}, 15(9):2499--2512, 2006.

\bibitem{beck2017first}
Amir Beck.
\newblock {\em First-order methods in optimization}, volume~25.
\newblock SIAM, 2017.

\bibitem{boyd2011distributed}
Stephen Boyd, Neal Parikh, Eric Chu, Borja Peleato, Jonathan Eckstein, et~al.
\newblock Distributed optimization and statistical learning via the alternating
  direction method of multipliers.
\newblock {\em Foundations and Trends{\textregistered} in Machine learning},
  3(1):1--122, 2011.

\bibitem{candes2005decoding}
Emmanuel Candes and Terence Tao.
\newblock Decoding by linear programming.
\newblock {\em arXiv preprint math/0502327}, 2005.

\bibitem{candes2007dantzig}
Emmanuel Candes, Terence Tao, et~al.
\newblock The dantzig selector: Statistical estimation when p is much larger
  than n.
\newblock {\em The annals of Statistics}, 35(6):2313--2351, 2007.

\bibitem{Castillo2009combined}
E.~Castillo, C.~Castillo, and A.S. et~al. Hadi.
\newblock Combined regression models.
\newblock {\em Computational Statistics}, 24:37--66, 2009.

\bibitem{chawla2012composite}
Krishan~K Chawla.
\newblock {\em Composite materials: science and engineering}.
\newblock Springer Science \& Business Media, 2012.

\bibitem{chui2000new}
Haili Chui and Anand Rangarajan.
\newblock A new algorithm for non-rigid point matching.
\newblock In {\em Proceedings IEEE Conference on Computer Vision and Pattern
  Recognition. CVPR 2000 (Cat. No. PR00662)}, volume~2, pages 44--51. IEEE,
  2000.

\bibitem{demmel1997applied}
James~W Demmel.
\newblock {\em Applied numerical linear algebra}, volume~56.
\newblock Siam, 1997.

\bibitem{du2019fuselage}
J.~Du, X.~Yue, J.~H. Hunt, and J.~Shi.
\newblock Optimal placement of actuators using sparse learning for composite
  fuselage shape control.
\newblock {\em ASME Transactions, Journal of Manufacturing Science and
  Engineering}, 2019.

\bibitem{gates2007boeing}
Dominic Gates.
\newblock Boeing finds 787 pieces aren’t quite a perfect fit.
\newblock {\em Rapport technique, Seattle Times}, 2007.

\bibitem{ghadimi2015optimal}
Euhanna Ghadimi, Andr{\'e} Teixeira, Iman Shames, and Mikael Johansson.
\newblock Optimal parameter selection for the alternating direction method of
  multipliers ({ADMM}): quadratic problems.
\newblock {\em IEEE Transactions on Automatic Control}, 60(3):644--658, 2015.

\bibitem{james1983fitting}
F~James.
\newblock Fitting tracks in wire chambers using the chebyshev norm instead of
  least squares.
\newblock {\em Nuclear Instruments and Methods in Physics Research},
  211(1):145--152, 1983.

\bibitem{james2009dasso}
Gareth~M James, Peter Radchenko, and Jinchi Lv.
\newblock Dasso: connections between the dantzig selector and lasso.
\newblock {\em Journal of the Royal Statistical Society: Series B (Statistical
  Methodology)}, 71(1):127--142, 2009.

\bibitem{knightasymptotic}
KEITH KNIGHT.
\newblock On the asymptotic distribution of the $\ell_\infty$ estimator in
  linear regression.
\newblock \noop{3001} 2017, manuscript.

\bibitem{milanese1982estimation}
Mario Milanese and Gustavo Belforte.
\newblock Estimation theory and uncertainty intervals evaluation in presence of
  unknown but bounded errors: Linear families of models and estimators.
\newblock {\em IEEE Transactions on automatic control}, 27(2):408--414, 1982.

\bibitem{qi2015theoretical}
Chong Qi.
\newblock Theoretical uncertainties of the duflo--zuker shell-model mass
  formulae.
\newblock {\em Journal of Physics G: Nuclear and Particle Physics},
  42(4):045104, 2015.

\bibitem{rockafellar2015convex}
Ralph~Tyrell Rockafellar.
\newblock {\em Convex analysis}.
\newblock Princeton university press, 2015.

\bibitem{wen2018feasibility}
Yuchen Wen, Xiaowei Yue, Jeffrey~H Hunt, and Jianjun Shi.
\newblock Feasibility analysis of composite fuselage shape control via finite
  element analysis.
\newblock {\em Journal of manufacturing systems}, 46:272--281, 2018.

\bibitem{yue2018surrogate}
Xiaowei Yue, Yuchen Wen, Jeffrey~H Hunt, and Jianjun Shi.
\newblock Surrogate model-based control considering uncertainties for composite
  fuselage assembly.
\newblock {\em Journal of Manufacturing Science and Engineering},
  140(4):041017, 2018.

\bibitem{zolghadri2004minimax}
Ali Zolghadri and David Henry.
\newblock Minimax statistical models for air pollution time series. application
  to ozone time series data measured in bordeaux.
\newblock {\em Environmental Monitoring and Assessment}, 98(1-3):275--294,
  2004.

\end{thebibliography}

\newpage
\section*{Appendix}\label{sec:appendix}
All the proofs in this paper are provided in this appendix. The proofs of main results in Section \ref{sec:main} are in Appendix \ref{sec:A}, and the Algorithm \ref{alg:admm2} in Section \ref{sec:alg} is provided in Appendix \ref{add2}. Specifically, the proofs of Lemma \ref{lemma:feasibleTruth}, Theorem \ref{thm:estError_truth_feasible} and Theorem \ref{thm:estError} are given in \ref{sec:A1}, \ref{proof:thm:estError_truth_feasible} and \ref{proof:thm:estError}, respectively. 

\appendix
\section{Proofs in Section \ref{sec:main}}\label{sec:A}

Throughout this proof, we will assume that the variance parameter in Formulation (\ref{eq:dataGeneration}): $\sigma^2 = 1$. We will also assume that the observed feature matrix $X$ is standardized with column mean as $0$ and standard deviation in the order of $\sqrt{n}$.
The general case would follow from a simple rescaling procedure.
Denote $\mathcal{S}_0 := \mbox{supp}(\beta^\ast)$, the support set of the ground truth $\beta^\ast$.

\subsection{Proof of Lemma \ref{lemma:feasibleTruth}}\label{sec:A1}

\begin{proof}
Given the data generation mechanism in (\ref{eq:dataGeneration}), the random error vector $\epsilon$ is independent sub-Gaussian random variables.
We thus have the following tail inequality:
\begin{equation*}
\mathbb{P}(\frac{\epsilon_i}{\sqrt{n}} > t) \leq 2e^{-\frac{nt^2}{2}}.
\end{equation*}
By Bonferroni bound, we have
\begin{equation*}
\mathbb{P}(\|\frac{\epsilon_i}{\sqrt{n}}\|_\infty > t) \leq 2ne^{-\frac{nt^2}{2}}.
\end{equation*}
Let $t = O(\sqrt{\frac{\log p}{n}})$, say $\lambda_0 = t = \sqrt{\alpha\frac{\log p}{n}}$ for some $\alpha > 2 $,  we have with probability larger than $1 - \frac{2n}{p} \cdot (\frac{1}{p})^{\frac{\alpha - 2}{2}} $, 
\begin{equation*}
|\frac{\epsilon_i}{\sqrt{n}}| = \frac{1}{\sqrt{n}}|Y_i - \langle(X^T)_i,\beta^\ast\rangle| \leq \lambda_0,
\end{equation*}
for all $i = 1, \cdots, n$.

\end{proof}

\subsection{Proof of Theorem \ref{thm:estError_truth_feasible}}
\label{proof:thm:estError_truth_feasible}
\begin{proof}
In order to prove the statement, we will need the following lemma.
\begin{lemma}\label{lemma:nu}
Denote $\nu = \hat{\beta} - \beta^\ast$, the difference vector of the optimal solution and the ground truth. We have $\nu \in \{x \in \mathbb{R}^p: \|x_{\mathcal{S}_0^c}\|_1 \leq \|x_{\mathcal{S}_0}\|_1\}$, which is a cone in $\mathbb{R}^p$. 
\end{lemma}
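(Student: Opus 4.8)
The plan is to establish the so-called cone (or basic inequality) condition using only the optimality of $\hat\beta$ together with the feasibility of the ground truth, which is the standard first step in the analysis of constrained $\ell_1$-minimization estimators such as the Dantzig selector. Throughout I would work on the high-probability event guaranteed by Lemma~\ref{lemma:feasibleTruth}, on which $\beta^\ast$ is a feasible point of Problem~(\ref{eq:formulationLasso}); this is the only place where the sub-Gaussian noise assumption enters, so the conclusion will inherit the same probability bound.

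First I would exploit optimality. Since $\hat\beta$ minimizes the $\ell_1$ objective over the feasible region and $\beta^\ast$ also lies in that region, we immediately obtain the comparison
\begin{equation*}
\|\hat\beta\|_1 \leq \|\beta^\ast\|_1.
\end{equation*}
Notice that this inequality, rather than any property of the $\ell_\infty$ loss itself, is what drives the entire argument; the loss only matters insofar as it defines the feasible set used in Lemma~\ref{lemma:feasibleTruth}.

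Next I would decompose both sides over the support set $\mathcal{S}_0 = \mathrm{supp}(\beta^\ast)$ and its complement. Writing $\hat\beta = \beta^\ast + \nu$ and using $\beta^\ast_{\mathcal{S}_0^c} = 0$, the left-hand norm splits as
\begin{equation*}
\|\hat\beta\|_1 = \|\beta^\ast_{\mathcal{S}_0} + \nu_{\mathcal{S}_0}\|_1 + \|\nu_{\mathcal{S}_0^c}\|_1 .
\end{equation*}
Applying the reverse triangle inequality to the first term gives $\|\beta^\ast_{\mathcal{S}_0} + \nu_{\mathcal{S}_0}\|_1 \geq \|\beta^\ast\|_1 - \|\nu_{\mathcal{S}_0}\|_1$; substituting this into the comparison $\|\hat\beta\|_1 \leq \|\beta^\ast\|_1$ cancels the common $\|\beta^\ast\|_1$ term and leaves exactly $\|\nu_{\mathcal{S}_0^c}\|_1 \leq \|\nu_{\mathcal{S}_0}\|_1$, which is the claimed cone membership.

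I do not anticipate any genuine obstacle, since each step is elementary; the only point needing care is bookkeeping, namely that the whole derivation is conditional on the feasibility event of Lemma~\ref{lemma:feasibleTruth}. The real significance of this lemma lies downstream: the cone it produces coincides with the set $\mathcal{C}$ on which Assumption~\ref{assump:RSC} asserts restricted strong convexity, so establishing $\nu \in \mathcal{C}$ is precisely what will later permit converting a bound on the residual or on $\|X\nu\|_2$ into a bound on $\|\nu\|_2$ in the proof of Theorem~\ref{thm:estError_truth_feasible} and, with the additional column-norm hypothesis, Theorem~\ref{thm:estError}.
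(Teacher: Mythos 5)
Your proposal is correct and follows essentially the same route as the paper's own proof: the $\ell_1$ comparison $\|\hat\beta\|_1 \leq \|\beta^\ast\|_1$ from optimality plus feasibility, the decomposition over $\mathcal{S}_0$ and $\mathcal{S}_0^c$ using $\beta^\ast_{\mathcal{S}_0^c}=0$, and the reverse triangle inequality to cancel $\|\beta^\ast\|_1$. Your explicit remark that the whole argument is conditional on the feasibility event of Lemma~\ref{lemma:feasibleTruth} is slightly more careful bookkeeping than the paper, which simply asserts feasibility of $\beta^\ast$ without flagging the probabilistic caveat.
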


\begin{proof}
Due to the definition of $\hat{\beta}$, which is the optimal solution to (\ref{eq:formulationLasso}), and $\beta^\ast$ is a feasible solution to (\ref{eq:formulationLasso}), we have 
$$\|\hat{\beta}\|_1 \leq \|{\beta}^\ast\|_1.$$
Thus, we have
\begin{equation*}
\begin{split}
\|{\beta}^\ast\|_1& \geq \|\hat{\beta}\|_1\\
 &= \|\beta^\ast + \nu\|_1\\
& =  \|(\beta^\ast + \nu)_{\mathcal{S}_0} \|_1 +  \|(\beta^\ast + \nu)_{\mathcal{S}_0^c}\|_1\\
& \geq \|\beta^\ast_{\mathcal{S}_0} \|_1 - \| \nu_{\mathcal{S}_0} \|_1 + \|\nu_{\mathcal{S}_0^c}\|_1\\
& \geq \|{\beta}^\ast\|_1  - \| \nu_{\mathcal{S}_0} \|_1 + \|\nu_{\mathcal{S}_0^c}\|_1,
\end{split}
\end{equation*}
which proves the lemma.
\end{proof}

According to the restricted eigenvalue assumption, we have the following inequality:
\begin{equation}
\gamma \|\nu\|_2^2 \leq \frac{1}{n} \|X\nu\|^2_2.
\end{equation}
On the other hand, 
\begin{equation}
\begin{split}
\frac{1}{n} \|X\nu\|^2_2 &= \frac{1}{n}\nu^TX^TX\nu\\
&\leq \frac{1}{n}\|X^TX\nu\|_\infty\|\nu\|_1.
\end{split}
\end{equation}
Since $\|\nu_{\mathcal{S}_0^c}\|_1 \leq \| \nu_{\mathcal{S}_0} \|_1$, and according to Cauchy--Schwarz inequality, we have 
$$\|\nu_{\mathcal{S}_0^c}\|_1 \leq \| \nu_{\mathcal{S}_0} \|_1 \leq \sqrt{S}\| \nu_{\mathcal{S}_0} \|_2.$$
Thus, we have 
\begin{equation}
\begin{split}
\|\nu\|_1 &= \|\nu_{\mathcal{S}_0^c}\|_1 + \| \nu_{\mathcal{S}_0} \|_1 \\
&\leq 2 \| \nu_{\mathcal{S}_0} \|_1\\
&\leq 2 \sqrt{S}\| \nu_{\mathcal{S}_0} \|_2\\
&\leq  2 \sqrt{S}\| \nu\|_2.
\end{split}
\end{equation}
For $\frac{1}{n}\|X^TX\nu\|_\infty$, we have 
\begin{equation}
\begin{split}
\frac{1}{n}\|X^TX\nu\|_\infty &= \max_{i = 1}^p|\langle \frac{1}{\sqrt{n}}X_i, \frac{1}{\sqrt{n}}X\nu \rangle|\\
&\leq \max_{i = 1}^p\|\frac{1}{\sqrt{n}}X_i\|_1\|\frac{1}{\sqrt{n}}X\nu\|_\infty\\
&\leq \|\frac{1}{\sqrt{n}}X\|_1\|\frac{1}{\sqrt{n}}X\nu\|_\infty.
\end{split}
\end{equation}
According to our Formulation (\ref{eq:formulationLasso}),  and the fact that both $\hat{\beta}$ and $\beta^\ast$ are feasible to (\ref{eq:formulationLasso}), we have 
$$\|\frac{1}{\sqrt{n}}X\nu\|_\infty \leq 2\lambda_0.$$
According  to the normalization on $X$ and Cauchy--Schwarz inequality, we have 
\begin{equation}
\label{eq:proof}
 \|\frac{1}{\sqrt{n}}X\|_1 \leq O(\sqrt{n}).
\end{equation}
Combing the upper-bounds above, we have
\begin{equation}
\begin{split}
\gamma \|\nu\|_2^2 \leq \frac{1}{n} \|X\nu\|^2_2 &\leq \frac{1}{n}\|X^TX\nu\|_\infty\|\nu\|_1\\
&\leq O(\lambda\sqrt{Sn})\| \nu\|_2.
\end{split}
\end{equation}
Thus we obtain that 
\begin{equation}
\begin{split}
\|\nu\|_2 \leq O(\sqrt{S\log p}).
\end{split}
\end{equation}
\end{proof}

\subsection{Proof of Theorem \ref{thm:estError}}
\label{proof:thm:estError}
\begin{proof}
Recall the way we bound $\|\frac{1}{\sqrt{n}}X\|_1 $ in Inequality (\ref{eq:proof}), if we in addition have that $\|X\|_1 = O(\sqrt{n})$, we have 
\begin{equation}
\begin{split}
\gamma \|\nu\|_2^2 \leq \frac{1}{n} \|X\nu\|^2_2 &\leq \frac{1}{n}\|X^TX\nu\|_\infty\|\nu\|_1\\
&\leq O(\lambda\sqrt{S})\| \nu\|_2.
\end{split}
\end{equation}
Thus we obtain that 
\begin{equation}
\begin{split}
\|\nu\|_2 \leq O(\sqrt{\frac{S\log p}{n}}).
\end{split}
\end{equation}
\end{proof}

\subsection{Proof of Theorem \ref{thm:predictError}}\label{proof:thm:predictError}
\begin{proof}
The proof simply follows from the previous proofs:
\begin{equation}
\begin{split}
\frac{1}{n} \|X\nu\|^2_2 &\leq \frac{1}{n}\|X^TX\nu\|_\infty\|\nu\|_1\\
&\leq O(\lambda\sqrt{S})\| \nu\|_2\\
& = O(\frac{S\log p}{n}).
\end{split}
\end{equation}
\end{proof}



\subsection{Proof of Proposition \ref{prop:maxLambda}}
\label{proof:prop:maxLambda}
\begin{proof}
In order to make the proof more clear, we use the equivalenty statistical form, i.e., Problem (\ref{eq:formulation1}) for the proof.

We will start with the simplest case where $p = 1$.
In this case, we have the objective as
\begin{equation}
\label{eq:formulationLasso1}
\min \|Y - X\beta\|_\infty + \lambda |\beta|,
\end{equation}
where $Y, X \in \mathbb{R}^n$, $\beta \in \mathbb{R}$. 
By the definition of the $\ell_\infty$ norm, we can further write the objective in (\ref{eq:formulationLasso1}) as 
\begin{equation*}
\begin{split}
  &\min \|Y - X\beta\|_\infty + \lambda |\beta|\\
=&\min \max_{i = 1, \cdots, n}\{|Y_i - X_i\beta|\} + \lambda |\beta|.
\end{split}
\end{equation*}
It can be directly seen that this is a convex problem. 
For the first part $\max_{i = 1, \cdots, n}\{|Y_i - X_i\beta|\}$, which is the maximum of a set of convex functions, the resulting function is still convex. 
The second part $\lambda |\beta|$ itself is convex.
Finding the minimum to (\ref{eq:formulationLasso1}) is equivalent to find a stationary solution to it.

On the one hand, for any $\beta_0>0$, we have $\lambda |\beta| = \lambda \beta$. 
Let $i \in I = \{i: |Y_i - X_i\beta_0| = \|Y - X\beta_0\|_\infty\}$, we have in a small neighborhood of $\beta_0$, $\|Y - X\beta\|_\infty = |Y_i - X_i\beta|$.
Thus in this small neighborhood $[\beta_-, \beta_+]$, where $\beta_-, \beta_+ \geq 0$,  the objective is simply
\begin{equation*}
f(\beta) = |Y_i - X_i\beta| + \lambda \beta.
\end{equation*}
By first order condition of the above function $f(\beta)$, we have
$$\partial f(\beta) = \lambda  -  \mbox{sign}(Y_i - X_i\beta)X_i.$$
According to the assumption on $\lambda$, the function $f(\beta)$ is increasing in the small neighborhood $[\beta_-, \beta_+]$.

On the other hand, for any $\beta_0<0$, we have $\lambda |\beta| = -\lambda \beta$. 
Let $i \in I = \{i: |Y_i - X_i\beta_0| = \|Y - X\beta_0\|_\infty\}$, we have in a small neighborhood of $\beta_0$, $\|Y - X\beta\|_\infty = |Y_i - X_i\beta|$.
Thus in this small neighborhood $[\beta_-, \beta_+]$, where $\beta_-, \beta_+ \leq 0$,  the objective is simply
\begin{equation*}
f(\beta) = |Y_i - X_i\beta| - \lambda \beta.
\end{equation*}
By first order condition of the above function $f(\beta)$, we have
\begin{equation*}
\partial f(\beta) = -\lambda  -  \mbox{sign}(Y_i - X_i\beta)X_i.
\end{equation*}
According to the assumption on $\lambda$, the function $f(\beta)$ is decreasing in the small neighborhood $[\beta_-, \beta_+]$.

Using the two observations above, we can obtain that $\beta = 0$ is the unique global minimum for the objective in (\ref{eq:formulationLasso1}).

Now, we will generalize the above proof in the 1D case to high-dimensional case, where we assume $\beta \in \mathbb{R}^p$, $p > 1$.

For any $\beta_0 \in \mathbb{R}^p$, let $J = \{j: \beta_{0j} \neq 0\}$. 
Define the neighborhood of $\beta_0$ as $\mathcal{N} = \{\beta \in \mathbb{R}^p: \beta_j = 0, \mbox{ for } j \not\in J; \beta_j \in [\beta_{0j} - \epsilon, \beta_{0j} + \epsilon], \mbox{ for } j \in J\}$, where $\epsilon > 0$ is chosen such that $(\beta_{0j} - \epsilon)(\beta_{0j} + \epsilon)>0$ for all $j \in J$.

Let $i \in I = \{i: |Y_i - X_i\beta_0| = \|Y - X\beta_0\|_\infty\}$, we have that in  the small neighborhood $\mathcal{N}$ of $\beta_0$, $\|Y - X\beta\|_\infty = |Y_i - X_i\beta|$.
Thus within this small neighborhood $\mathcal{N}$,  the objective is simply
\begin{equation*}
f(\beta) = |Y_i - X_i\beta| + \lambda \|\beta\|_1.
\end{equation*}
By first order condition of the above function $f(\beta)$, we have
$$\partial f(\beta) = \lambda \mbox{sign}(\beta)  -  \mbox{sign}(Y_i - X_i\beta)X_i,$$
where $\mbox{sign}(\beta) \in \mathbb{R}^p$ is the indicator vector of the signs for $\beta$. 
According to the assumption on $\lambda$, $(\partial f(\beta))_k > 0$ for $k \in \mathbb{K}_1 = \{k: \beta_{0k}>0\}$;  $(\partial f(\beta))_k < 0$ for $k \in \mathbb{J}/\mathbb{K}_1$.
Thus we conclude that $\beta = 0$, is the unique global minimum for the objective in (\ref{eq:formulationLasso}), which is $[F_1', F_2']'= 0$ in the Problem (\ref{eq:objforce_l1}).
\end{proof}

\section{Algorithm \ref{alg:admm2}}
 \label{add2}
\begin{algorithm}[H]
\caption{The ADMM for estimating $y_3=(y_2)_S=[(F_1)_{S_1};(F_2)_{S_2}]$}.
\begin{algorithmic}[1]
\STATE Input: $\rho, \lambda, e_3, e_4, n, M, B, U_1, \psi_1, U_2, \psi_2, A, b$
\STATE Initialize $z^0, u^0$
\STATE \emph{loop}:
\FOR {$k=1: K$}
\STATE $u_1^k=\{u_i^k\}_{i=1,...,2n},u_3^k=\{u_i^k\}_{i=1+2n,...,2n+M}$,
\STATE $z_1^k=\{z_i^k\}_{i=1,...,2n},z_3^k=\{z_i^k\}_{i=1+2n,...,2n+M}$,
\STATE $y_1^{k+1}=z_1^k-u_1^k-P_{B_ {\|\cdot\|_1 [0,1]}}(z_1^k-u_1^k)$,
\STATE $y_3^{k+1}=z_3^k-u_3^k$,
\STATE $y^{k+1}=[(y_1^{k+1})', (y_3^{k+1})']'$,
\STATE $z^{k+1}=y^{k+1}+u^k-E'(EE')^{-1}(Ez-b)$,
\STATE $u^{k+1}=u^k+y^{k+1}-z^{k+1}$, 
\STATE $r^{k+1}=y^{k+1}-z^{k+1}$,
\STATE $s^k=\rho(z^{k+1}-z^k)$,
\STATE $e_1=\sqrt{m}e_3+e_4\max\{\|z^{k+1}\|_2,\|y^{k+1}\|_2\}$,
\STATE $e_2=\sqrt{m}e_3+e_4\|\rho u^{k+1}\|_2$,
\IF{$\|r^{k+1}\|_2\leq e_1$ and $\|s^{k+1}\|_2\leq e_2$}
\RETURN $y_3$
\ENDIF
\ENDFOR
\end{algorithmic}
 \label{alg:admm2}
\end{algorithm}
\end{document}